\definecolor{lightgray}{gray}{0.75}
\newtheorem{theorem}{Theorem}[section]
\newtheorem{corollary}[theorem]{Corollary}
  \DeclareMathOperator{\stderr}{stderr}
  \DeclareMathOperator{\tr}{tr}
  \DeclareMathOperator{\var}{var}
  \DeclareMathOperator{\SR}{SR}
 \DeclareMathOperator{\implied}{implied}
  \DeclareMathOperator{\Lo}{Lo}
  \DeclareMathOperator{\Mertens}{Mertens}
\newcommand{\be}{\begin{equation}}
\newcommand{\ee}{\end{equation}}
\begin{document}

\title{Optimal Dynamic Strategies on Gaussian Returns}

\author{Nick Firoozye and Adriano Koshiyama}


\maketitle

\begin{center}
Department of Computer Science\\
University College London \\
\texttt{n.firoozye@ucl.ac.uk, a.koshiyama@cs.ucl.ac.uk}\\
\date{\today}

\end{center}

\begin{abstract}
Dynamic trading strategies, in the spirit of trend-following or mean-reversion, represent an only partly understood but lucrative and pervasive area of modern finance. Assuming Gaussian returns and Gaussian dynamic weights or {\sl signals}, (e.g., linear filters of past returns, such as simple moving averages, exponential weighted moving averages, forecasts from ARIMA models), we are able to derive closed-form expressions for the first four moments of the strategy's returns, in terms of correlations between the random signals and unknown future returns. By allowing for randomness in the asset-allocation and modelling the interaction of strategy weights with returns, we demonstrate that positive skewness and excess kurtosis are essential components of all positive Sharpe dynamic strategies, which is generally observed  empirically;  demonstrate that {\sl total least squares} (TLS) or {\sl orthogonal least squares} is more appropriate than OLS for maximizing the Sharpe ratio, while {\sl canonical correlation analysis} (CCA) is similarly appropriate for the multi-asset case;  derive standard errors on Sharpe ratios which are tighter than the commonly used standard errors from Lo; and derive standard errors on the skewness and kurtosis of strategies,  apparently new results. We demonstrate these results are applicable asymptotically for a wide range of stationary time-series.

{\em \textbf{Keywords}: Algorithmic Trading, Dynamic Strategies, over-fitting, Quantitative Finance, Signal Processing}

{{\em MSC Numbers:} 60G10, 62E15, 62P05, 62F99, 91G70, 91G80\\
	{\em JEL Classifications:} C13, C58, C61, G11, G19}

\end{abstract}





\section{Introduction}

CTAs ({\em Commodity Trading Advisors}) or managed-future accounts are a subset of asset managers with  over \$341bn of assets under management \cite{BarclayHedge} as of Q2 2017. The predominant strategy which CTAs employ is trend-following. Meanwhile, bank structuring desks have devised a variety of {\em risk-premia} or {\em styles} strategies (including momentum, mean-reversion, carry, value, etc) which have been estimated to correspond to between approximately \$150bn \cite{Miller} to \$200bn \cite{Allenbridge} assets under management. Responsible for over 80\% of trade volume in equities and a large (but undocumented amount due to the OTC nature) of the FX market, \cite{HFT}, high-frequency trading firms (HFTs) and e-trading desks in investment banks are known to make use of many strategies which are effectively short-term mean-reversion strategies. In spite of the relatively large industry undergoing  recent significant growth, a careful analysis of the statistical properties of strategies, including their optimisation, has only been undertaken in relatively limited contexts.

\begin{figure}[h!]
	\centering
	\includegraphics[width=\linewidth]{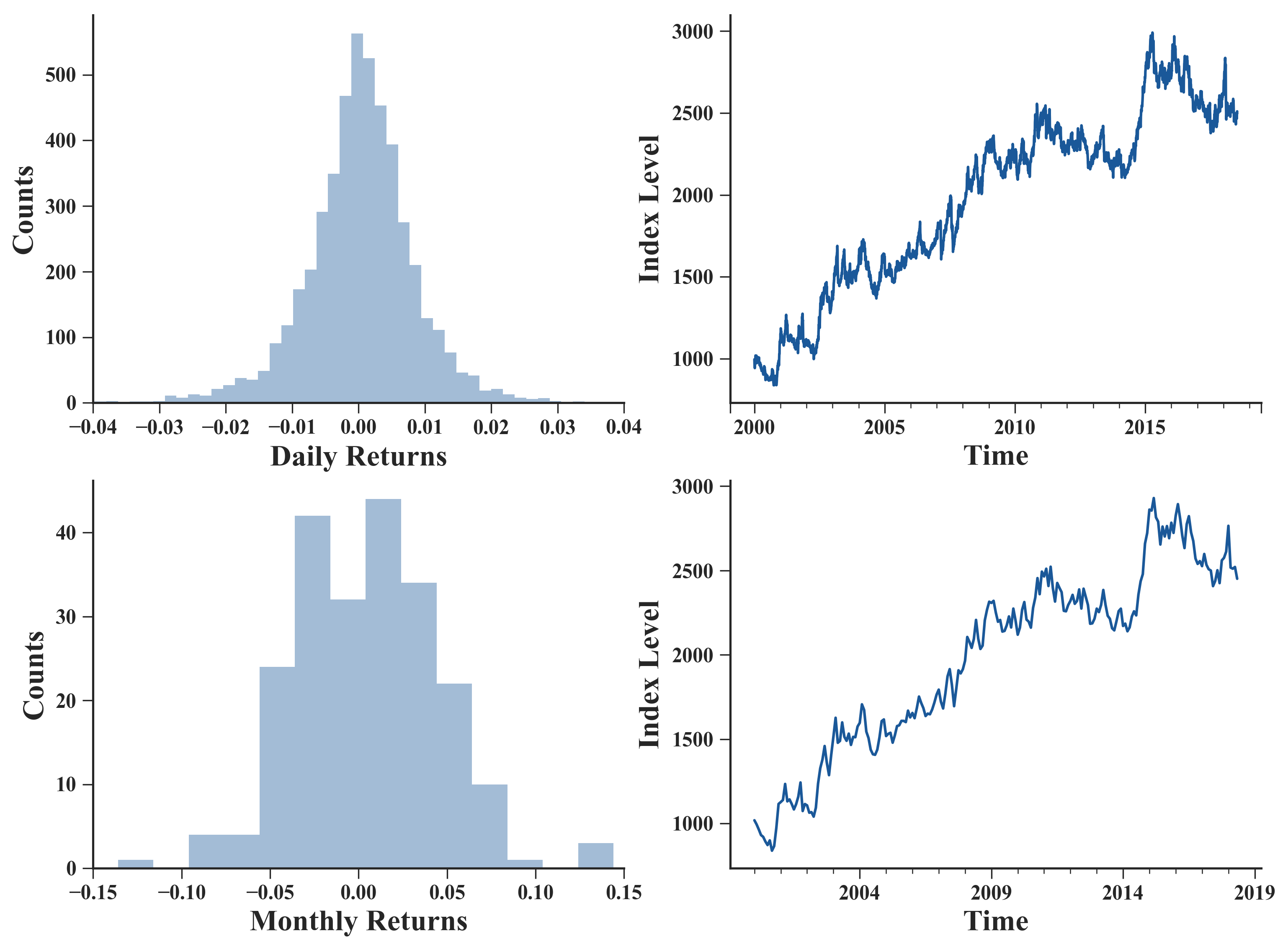}
	\caption{{\bf SocGen Trend Followers Index:} daily returns and monthly returns profiles} \label{}
\end{figure}

The corresponding statistics for the SG Trend index area in the table below and except for some noise show that skewness and excess kurtosis are laregly positive for CTAs.

\begin{table}[h!]
  \centering
    \caption{\bf Soc Gen Trend Index, Daily and Monthly Statistics}
    \begin{tabular}{l|cc}
      $\;$ & \textbf{Daily} & \textbf{Monthly}\\ 
      \hline
      Ann Avg Return (\%) & 5.695 & 5.752 \\ 
      Volatily (\%)&  13.283&14.088\\
      Sharpe Ratio&  0.429 & 0.408\\
      Skewness & -0.448& 0.186\\
      Exc Kurtosis & 3.845 & 0.807\\
      
   \end{tabular}
    \label{tab:table1}
\end{table}


Algorithmic trading strategies we consider are time-series strategies, often divided into mean-reverting or reversal strategies, trend-following or momentum strategies, and value strategies (also sometimes known as mean-reversion).\footnote[1]{Other common strategies include carry and short-gamma or short-vol. Unlike mean-reversion, momentum, and value, these do not rely on the specifics of the auto-correlation function}. Each such time-series related strategy is a form of signal processing.
In more standard signal processing, the major interest is in the de-noised or smoothed signals and their properties. In algorithmic trading, the interest is instead in the relationship between statistics like the moving average or some other form of smoothed historic returns (unfortunately, usually termed the {\sl signal}) and the unknown future returns. We show that when we consider both to be random variables, it is actually the interaction between these so-called signals and future returns which determines the strategy's behaviour.

Equities, and in particular SPX is known to mean-revert over short horizons (e.g., shorter than 1m, typically on the order of 5-10 days) and trend only over longer horizons (i.e., 3m-18m), and mean-revert again over even longer horizons (i.e., 2y-5y) as has been well-established by the quant equities literature following on the study of \cite{Jegadeesh}  and the work of \cite{FamaFrench}. This distinct form of behaviour, with reversals on small-scale, trend on an intermediate and reversion on a long scale, is frequently observed across a large number of asset classes and strategies can be designed to take advantage of the behaviour of asset-prices across each time-scale.

Our initial goal is to find a signal, $X_t$ usually a linear function of historic log-(excess) returns $\{R_t\}$ which can be used as a dynamic weight for allocating to the underlying asset on a regular basis. We assume log-price $P_t=\sum_1^t R_{k}$. Examples of commonly used signals for macro-traders (CTAs, and other trend followers) include:

\begin{itemize}
	\item Simple Moving Average (SMA): $$X_t=\frac{1}{T}\sum_1^T R_{t-k}$$
	\item Exponentially-Weighted Moving Average (EWMA): $$X_t=c(\lambda)\sum_{k=1}^\infty \lambda^k R_{t-k}$$
	\item Holt-Winters (HW, or double exponential smoothing) with or without seasonals, Damped HW
	\item Difference between current price and moving average:\footnote[2]{We note that if we replace $P$ by $\log(P)$ and $R_t=\log(P_t)-\log(P_{t-1})$, this filter amounts to $X_t=\sum \frac{T-k}{T} R_{t-k}$, i.e., a {\em triangular} filter on returns, which bears some similarity to EWMA on returns.}
	$$X_t= P_{t-1} - {1\over T}\sum_1^T P_{t-k}$$
	\item Forecasts from ARMA(p, q) models: $$X_t = \phi_1 R_{t-1} + ... + \phi_p R_{t-p} + \theta_1 \varepsilon_{t-1} + ... + \theta_q \varepsilon_{t-q} $$
	\item Differences between SMAs: $$X_t= {1\over {T_1}}\sum_1^{T_1} P_{t-k} - {1\over {T_2}}\sum_1^{T_2} P_{t-j}$$
	\item Differences between EWMAs: $$X_t=c(\lambda_1)\sum \lambda_1^k R_{t-k}-c(\lambda_2)\sum \lambda_2^k R_{t-k}$$

\end{itemize}
and variations using volatility or variance weighting such as z-scores (SMAs or EWMAs weighted by a simple or weighted standard deviation, see \cite{HarveyVol}), and transformations of each of the signals listed above (e.g. allocations depending on sigmoids of moving averages, reverse sigmoids, Winsorised signals, etc.). Other signals commonly used  in equity algorithmic trading include economic and corporate releases, and sentiment as derived from unstructured datasets such as news releases.

The returns from algorithmic trading strategies are well documented (see, e.g., \cite{ValueMomentum}, \cite{Baltas}, \cite{Hurstetal} and \cite{Lemperiere}).
Although many methods have been used to derive signals by practitioners, (see, e.g., \cite{BruderTrend} for a compendium), many of these methods are equally good (or bad) and it makes 
little practical difference whether one uses ARMA, EWMA or SMA as the starting point for a strategy design (see e.g., \cite{Levine-Pedersen}).
In this paper, we only touch on normalised signals (e.g., {\em z-scores}) and strategy returns, leaving their discussion for a subsequent study. We meanwhile note that the spirit of this paper's results carry through for the case of normalized signals and strategy returns.

Frequently, exponential smoothers have been the effective {\em best} models in various economic forecasting competitions (see, e.g., the results of the first three M-competitions \cite{Makridakis}), showing perhaps that their simplicity bestows a certain robustness, and their original intuition was sound even if the statistical foundation took a significant time to catch up.  In fact, EWMA and HW can both be justified as state-space models (see \cite{Hyndman}), and this formulation brings with it a host of benefits from mere intellectual satisfaction to statistical hypothesis tests, change-point tests, and a metric for goodness-of-fit. Exponential smoothing with multiplicative or additive seasonals and dampened weighted slopes are used to successfully forecast a significant number of  economic time-series (e.g., inventories, employment, monetary aggregates). EWMA (and the related (S)MA), and HW remain some of the most commonly used filtering methods for CTAs and HFT shops.


In the case of returns which are normal  with fixed autocorrelation function (ACF), i.e., those which are covariance stationary, signals created from  linear combinations of historic returns are indeed normal random variables which are jointly normal with returns. External datasets (e.g., unstructured data, corporate releases), are less likely  to contain normally distributed variables although there is an argument for asymptotic normality. Irrespective, our approach is to assume normality of both returns and signals as a starting point for further analysis.
%

While there is significant need for further study, there have nonetheless been a number of empirical and theoretical results of note in this area. Fung and Hsieh were the first to look at the empirical properties of momentum strategies \cite{Fung-Hsieh}, noting (without any theoretical foundation) the resemblance of strategy returns to straddle pay-offs.\footnote[3]{Or as they claimed, the returns of trend following resemble those of an extremely exotic option (which is not actually traded),  daily-traded ``look-back straddles.''}
Potters and Bouchaud \cite{Potters-Bouchaud} studied the significant positive skewness of trend-following returns, showing that for successful strategies, the median profitability of trades is negative. The empirical returns of dynamic strategies are far from normal, and common values for skewness and kurtosis for single strategies can have skewness in the range of $[1.3,1.7]$ and kurtosis in the range $[8.8,15.3]$ respectively (see \cite{Hoffman}).

Bruder and Gaussel \cite{Bruder} and \cite{Hamdan} (see Appendix 2 for a superlative use of SDE-bassed methods for analyzing a wide variety of dynamic strategies) used SDEs to study the power-option like behaviour of pay-offs. Martin and Zou considered general but IID discrete time distributions (see \cite{Martin-Zou} and \cite{Martin-Bana}) to study the {\sl term-structure} of skewness over various horizons and the effects of certain non-linear transforms on the term structure of return distributions. More reBcently, Bouchaud et al \cite{Bouchaud} considered more general discrete-time distributions to study the convexity of pay-offs, and the effective dependence of returns on long-term vs short-term variance. Other studies have focused predominantly on the empirical behaviour of returns, the relationship to macro-financial conditions, the persistence of trend-following returns, and the benefits from their inclusion into broader portfolios.

In the larger portion of the theoretical studies, the assumptions have been minimal in order to consider more general return distributions.  Due to their generality, the derived results are somewhat more restrictive. Rather than opting for the most general, we choose more specific distributional assumptions, in the hope that we can obtain broader, possibly more practical results. Aside from this current study, the authors have extended this work further to consider the endemic problem of over-fitting (see \cite{Firoozye2}), proposing total least squares with covariance penalties as a means of model-selection, showing their outperformance to standard methods, using OLS with AIC.
%

In this paper, we consider underlying assets with stationary Gaussian returns and a fixed auto-correlation function (i.e., they are a discrete Gaussian process). While we make no
defence of  the realism of using normal returns, we find that normality
can be exploited in order to ensure we understand how the returns of linear
and non-linear strategies should work in theory and to further the understanding of the interaction between properties of returns and of the  {\sl signals} as a basis for the development and analysis of dynamic strategies in practice.

Given a purely-random mean-zero covariance-stationary discrete-time Gaussian process for returns,  the signals listed above, whether a EWMA or an ARMA forecast, can be expressed as convolution filters of past returns, i.e., our signal $X_t$ can be expressed as
$$ X_t = \sum_{k\geq 1} \phi(k) R_{t-k}$$
This is an example of a time-invariant linear filter of a Gaussian process. If we restrict our attention to those filters for which are square summable,  $\sum_1^\infty\phi(k)^2<\infty$, then it is well-known that the resulting filtered series is also Gaussian and jointly Gaussian with $R_t$.

Our underlying premise is that the important distributions to consider for the analysis of dynamic strategies is a product of Gaussians (rather than a single Gaussian as would usually apply in asymptotic analysis of asset returns). This product measure can be justified on many levels and we discuss large sample approximations in the appendix.

The resulting measure which determines the success of the strategy is the correlation between the returns and the signals, a measure which, in the context of measuring an active manager's skill is known as the {\em information coefficient} or IC as given in the {\em Fundamental Law of Active Management} detailed in \cite{GrinoldKahn}. While there is a large body of literature on the IC and its relationship to information ratios, (see for example \cite{Lee} for formulas similar to equation (\ref{Eqn:Sharpe})), the derivations, resulting formulae and conclusions differ significantly. 

We should also mention the work on random matrix theory by Potters and Bouchaud (\cite{PBRandMat}), which touches on many of the topics we consider in this paper. In particular their analysis of returns as products of Gaussians or t-distributions is very lose to our own. While many of the emphases are once again different to ours, we believe the general area of Random Matrix Theory to be a fruitful approach to trading strategies.

The primary tool we use to derive results is Isserlis' theorem \cite{Isserlis} or Wick's theorem (as it is known in the context of particle physics \cite{Wick}). This relates products and powers of multivariate normal random variables to their means and covariances. 
Wick's theorem has been applied in areas from particle physics, to quantum field theory to stock returns and there are some recent efforts to extend to non-Gaussian distributions (see, e.g., \cite{Michalowicz} for Gaussian-mixture and \cite{Kan} for products of quadratic forms and elliptic distributions), and it has been applied to continuous processes via the central limit theorem (see \cite{FCLTWick}). 
We have used these theorems in the context of dynamic (algorithmic) trading strategies to find expressions for the first four moments of strategy returns in closed-form. 
While it is not necessarily the aim of all scientific studies of trading strategies to find closed-form expressions, the ease with which we can describe strategy returns makes this direction relatively appealing and allows for a number of future extensions.

The paper is divided into sections on one asset, considered over a single period. With a normal signal, we will show there is a universal bound on the one-period Sharpe ratio, skewness and kurtosis. We explain the role of {\em total} or {\em orthogonal least squares} as an alternative to OLS for strategy optimisation.  
We look at the corresponding refinements to  measures of Sharpe ratio standard error for these dynamic strategies, improving on the large-sample theory based standard errors in more common use. We also introduce standard errors on skewness and kurtosis, which are distinct from those for Gaussian returns and present some basic results about multiple assets and diversification. Finally, we discuss the role of product measures, more pertinent to the study of dynamic strategies than simple Gaussian measures.
In the appendices, we present closed-form solutions to Sharpe ratios in the case of non-zero means. We also discuss extensions to our optimisations in the presence of transaction costs. We touch on the extension to multiple periods as well. As we mentioned, further extensions to over-fitting  by the use of covariance penalties (akin to Mallow's $C_p$ or AIC/BIC) have been presented separately in \cite{Firoozye2}.

\section{Single period linear strategies}

We consider the (log) returns of a single asset, ${R_{t}\sim\mathscr{N}(0,\sigma_R^{2})}$ returns with auto-covariance function at lag $k$, $\gamma(k)=E[R_t R_{t-k}]$, together with corresponding auto-correlation function (ACF), $c(k)=\gamma(k)/\gamma(0)$ at lag $k$. 

Our main aim is to work with strategies based on linear portfolio weights (or {\sl signals})
$X_{t}=\Sigma_{1}^{\infty}a_{k}R_{t-k}$ for coefficients $a_{k}$ generating the corresponding dynamic strategy returns $S_t=X_t\cdot R_t$ (here, and always, the signal, $X_t$ is assumed to only have appropriately lagged information).
Example strategy weights include exponentially weighted moving averages $a_{k}\propto\lambda^{k}$, simple moving averages $a_k = \frac{1}{T} \mathbbm{1}_{[1,\ldots,T]}$,  forecasts from ARMA models, etc. Most importantly, the  portfolio weights $X$ are normal and {\em jointly} normal with returns $R$. In Appendix \ref{Section:ConvolutionFilters}, we show that for a wide set of signals discussed in the Introduction, when applied to Gaussian returns, the signal and returns are jointly Gaussian.

We restrict our attention to return distributions over a single period. In the case of many momentum strategies, this period can be one day, if not longer. For higher-frequency intra-day strategies, this period can be much shorter. The pertinent concern is that the horizon (i.e., one period) is the same horizon over which the rebalancing of strategy weights is done. If weights are rebalanced every five minutes, then the single period should be five minutes. This is a necessary assumption in order to ensure the joint normality of (as yet indeterminate) signals and future returns. Moreover, this assumption will give some context to our results, which imply a maximal Sharpe ratio, maximal skewness and maximal kurtosis for dynamic linear strategies. 

We are interested in characterizing the moments of the strategy's unconditional returns, the corresponding standard errors on estimated quantities, and means of optimising various non-dimensional measures of returns such as the Sharpe ratio via the use of non-linear transformations of signals. Our goal is to look at unconditional properties of the strategy. It is important to avoid foresight in strategy design and this directly impacts the  conditional properties of strategies (e.g., conditional densities involve conditioning on the currently observed signal to determine properties of the returns, which are just Gaussian). In the context of our study, we are concerned with  one-period ahead returns of the   unconditional returns distribution of our strategy, where both the signals and the returns are unobserved, and the resulting distributions (in our case, the product of two normals) are much richer and more realistic -- for the interested reader, we have added a more detailed discussion of our framework in Appendix \ref{set-up}.

\subsection{Properties of linear strategies}

Given the joint normality of the signal and the returns, we can explicitly
characterise the one-period strategy returns (see \cite{Exact}). To allow for greater extendibility, we prefer to only consider the moments of the resulting distributions. These can be characterized easily using  Isserlis' theorem \cite{Isserlis}, which gives all moments for any multivariate normal random variable in terms of the mean and variance.  We also refer to  \cite{Haldane} who meticulously produces both non-central and central moments for powers and products of Gaussians. While this is a routine application of Isserlis' theorem, the algebra can be tedious, so we quote the results. 

\begin{theorem}[Isserlis (1918)]
If $X \sim \mathscr{N}(0,\Sigma)$,then 
$$E[X_1 X_2\cdots X_{2n}] = \sum_{i=1}^{2n} \prod_{i \neq j} E[X_i X_j]$$

and 
$$E[X_1 X_2\cdots X_{2n-1}] = 0$$
where the $\sum\prod $ is over all the $(2n)!/(2^n n!)$ unique partitions of $X_1,X_2,\ldots X_{2n}$ into pairs $X_i X_j$.
\end{theorem}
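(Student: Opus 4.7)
My plan is to prove Isserlis' theorem via the moment generating function. I would start by writing down $M(t) = E[\exp(t^\top X)] = \exp\bigl(\tfrac{1}{2} t^\top \Sigma t\bigr)$, the MGF of the centered multivariate Gaussian, and recall that any mixed moment is obtained by differentiation: $E[X_{i_1} \cdots X_{i_k}] = \partial^k M /(\partial t_{i_1}\cdots \partial t_{i_k})$ evaluated at $t = 0$, or equivalently equals $k!$ times the coefficient of $t_{i_1}\cdots t_{i_k}$ in the Taylor expansion of $M$ about the origin.

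The odd-order statement is then immediate: since $M(t) = M(-t)$, the Taylor series contains only even-degree monomials in the $t_i$, so any mixed moment of odd total order vanishes, giving $E[X_1 X_2 \cdots X_{2n-1}] = 0$.

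For the even case I would expand the exponential as a power series,
$$M(t) = \sum_{m \ge 0} \frac{1}{m!\,2^m} \Bigl(\sum_{i,j} \Sigma_{ij}\, t_i t_j\Bigr)^m.$$
The monomial $t_1 t_2\cdots t_{2n}$ has total degree $2n$, so only the $m=n$ term contributes, and the task reduces to reading off the coefficient of $t_1 \cdots t_{2n}$ in $\frac{1}{n!\,2^n}\bigl(\sum_{i,j}\Sigma_{ij} t_i t_j\bigr)^n$ and multiplying by $(2n)!$. Expanding the $n$-fold product, each way to produce $t_1\cdots t_{2n}$ amounts to choosing, for each of the $n$ quadratic factors, an ordered pair of indices from a distinct two-element subset of $\{1,\ldots,2n\}$. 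Each unordered pairing $\pi$ of $\{1,\ldots,2n\}$ into $n$ blocks is therefore generated exactly $n!\cdot 2^n$ times, $n!$ for the ordering of the factors and $2^n$ for the ordering within each factor (using $\Sigma_{ij} = \Sigma_{ji}$). The prefactor $1/(n!\,2^n)$ cancels this overcounting, and multiplying by $(2n)!$ yields
$$E[X_1 X_2 \cdots X_{2n}] = \sum_{\pi} \prod_{\{i,j\} \in \pi} \Sigma_{ij},$$
summed over all $(2n)!/(n!\,2^n)$ perfect matchings $\pi$ of $\{1,\ldots,2n\}$, which is the claim.

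The only nontrivial step is the combinatorial bookkeeping that matches the multinomial expansion of $(t^\top \Sigma t)^n$ to the set of perfect matchings while correctly tracking the symmetry factors of $n!$ and $2^n$. A cleaner alternative, which I would sketch as a sanity check, is induction on $n$ via the Gaussian integration-by-parts identity $E[X_1 \, f(X_2,\ldots,X_{2n})] = \sum_{j=2}^{2n} \Sigma_{1j}\, E[\partial_j f]$; applied to $f = X_2 X_3 \cdots X_{2n}$, it expresses the $2n$-th moment as a sum over the possible partners of $X_1$ of a $(2n-2)$-th moment, which by inductive hypothesis is itself a sum of pairing products, closing the induction immediately.
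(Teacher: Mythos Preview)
The paper does not prove this theorem; it is quoted as a classical result (Isserlis, 1918) and used as a tool, so there is no proof in the paper to compare against. Your argument via the moment generating function is the standard textbook proof and is essentially correct, as is the alternative inductive route via the Gaussian integration-by-parts identity (Stein's lemma).

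There is, however, a bookkeeping slip. You write that the mixed moment $E[X_{i_1}\cdots X_{i_k}]$ equals ``$k!$ times the coefficient of $t_{i_1}\cdots t_{i_k}$'' in the Taylor expansion of $M$, and later speak of ``multiplying by $(2n)!$.'' When the indices $i_1,\ldots,i_k$ are \emph{distinct}, the multi-index $\alpha=(1,\ldots,1)$ has $\alpha!=1$, so the moment equals the Taylor coefficient directly, with no factorial prefactor. Your combinatorial count is correct---the coefficient of $t_1\cdots t_{2n}$ in $\tfrac{1}{n!\,2^n}(t^\top\Sigma t)^n$ is indeed $\sum_\pi\prod_{\{i,j\}\in\pi}\Sigma_{ij}$---and your final displayed formula is right, but only because the spurious $(2n)!$ never actually enters it. Delete the ``$k!$ times'' and the ``multiplying by $(2n)!$'' and the argument is clean.
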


Haldane's paper quotes a large number of moment-based results for various powers of each normal. We quote the relevant results.

\begin{theorem}[Haldane (1942)]
\label{thm:Haldane}
If $x,y\sim \mathscr{N}(0,1)$ with correlation $\rho$ then 
\begin{eqnarray*}
E[xy] =& \rho\\
E[x^2y^2]=&1+2\rho^2\\
E[x^3y^3]=&3\rho(3+2\rho^2)\\
E[x^4y^4] =&3(3+24\rho^2+8\rho^4)
\end{eqnarray*}
and thus the central moments of $xy$ are
\begin{eqnarray}
\label{eqn:centered_moments1}
\mu_1 =& \rho\\
\label{eqn:centered_moments2}
\mu_2=&1+\rho^2 \\ 
\label{eqn:centered_moments3}
\mu_3=&2\rho(3+\rho^2)\\
\label{eqn:centered_moments4}
\mu_4= &3(3+14\rho^2+3\rho^4)
\end{eqnarray}
\end{theorem}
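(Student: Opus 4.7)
The plan is to derive the four raw moments $E[x^ky^k]$ for $k=1,2,3,4$ by a direct application of Isserlis' theorem to the Gaussian vector with $\mathrm{Var}(x)=\mathrm{Var}(y)=1$ and $E[xy]=\rho$, and then convert to central moments by the usual binomial expansion.

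The case $k=1$ is just the definition of $\rho$. For $k\geq 2$, the key bookkeeping device is that every Isserlis pairing of the $2k$ factors (with $k$ copies of $x$ and $k$ copies of $y$) is characterized by the triple $(a,b,c)$ giving the number of $(x,x)$, $(y,y)$, and $(x,y)$ matchings. Balancing the $x$'s and $y$'s forces $2a+c=k$ and $2b+c=k$, so $a=b$ and $c=k-2a$. Each $(x,x)$ or $(y,y)$ pair contributes a factor of $1$ and each $(x,y)$ pair contributes $\rho$, so an $(a,a,k-2a)$ pairing contributes $\rho^{k-2a}$. I would then count the number of distinct pairings for each admissible $a$: choose which $2a$ of the $x$'s are paired among themselves and partition them into unordered pairs (giving $\binom{k}{2a}(2a-1)!!$), do the same for the $y$'s, and match the remaining $k-2a$ $x$'s with the remaining $k-2a$ $y$'s in $(k-2a)!$ ways. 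Summing these combinatorial weights times $\rho^{k-2a}$ yields $E[x^ky^k]$. For $k=2$ this gives $1+2\rho^2$; for $k=3$ the admissible $a\in\{0,1\}$ yield $6\rho^3+9\rho=3\rho(3+2\rho^2)$; and for $k=4$ the admissible $a\in\{0,1,2\}$ yield $24\rho^4+72\rho^2+9=3(3+24\rho^2+8\rho^4)$. A sanity check is that the counts $1+2=3$, $6+9=15$, $24+72+9=105$ match $(2k)!/(2^kk!)$ as required by Isserlis.

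Passing to the central moments is purely algebraic: I would write
\[
\mu_n=E[(xy-\rho)^n]=\sum_{j=0}^{n}\binom{n}{j}(-\rho)^{n-j}E[x^jy^j]
\]
and substitute the four raw moments just computed. The identity $\mu_2=1+\rho^2$ drops out immediately, while $\mu_3$ and $\mu_4$ require collecting terms in powers of $\rho$ and simplifying to $2\rho(3+\rho^2)$ and $3(3+14\rho^2+3\rho^4)$ respectively.

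The main obstacle, if there is one, is bookkeeping rather than depth: ensuring that the combinatorial count of pairings in each $(a,a,k-2a)$ class is correct, and that signs and binomial coefficients are tracked carefully in the central-moment expansion. Beyond that, the proof is mechanical once Isserlis' theorem is in hand, which is consistent with the authors' remark that they simply ``quote the results.''
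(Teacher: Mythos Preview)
Your proposal is correct and is precisely the ``routine application of Isserlis' theorem'' that the paper alludes to but does not carry out; the paper simply cites Haldane and quotes the formulas without proof. Your pairing count $\bigl[\binom{k}{2a}(2a-1)!!\bigr]^2(k-2a)!$ and the binomial conversion to central moments both check, so there is nothing to add.
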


From these one period moments, (and a simple scaling argument giving the dependence on $\sigma(x)$ and $\sigma(y)$) we can characterise Sharpe ratio, skewness, etc., and can also define objective functions in order to determine some sense of optimality for a given strategy.

\begin{theorem}[Linear Gaussian]
\label{thm:Linear}
For single asset returns and a one period strategy, $\ensuremath{R_{t}\sim\mathscr{N}(0,\sigma_R^{2})}$  and $X_t\sim\mathscr{N}(0,\sigma_X^2)$ jointly normal with correlation $\rho$, the Sharpe ratio is given by
\be \SR = {\rho \over \sqrt{1+\rho^2}},
\label{Eqn:Sharpe}
\ee the skewness is given as
\be
\gamma_3=\frac{2\rho(3+\rho^{2})}{(1+\rho^{2})^{\frac{3}{2}}}, \label{Eqn:Skew}
\ee
and the kurtosis is given by
\be
\gamma_4 = \frac{3(3+14\rho^2+3\rho^4)}{(1+\rho^2)^2}
\label{Eqn:Kurt}
\ee
\end{theorem}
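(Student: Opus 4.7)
The plan is to reduce the theorem directly to Theorem \ref{thm:Haldane} by a scale normalization and then read off the three dimensionless quantities. First I would set $x = R_t/\sigma_R$ and $y = X_t/\sigma_X$, so that $x,y \sim \mathscr{N}(0,1)$ are jointly standard normal with correlation $\rho$, and write the strategy return as $S_t = X_t R_t = \sigma_X \sigma_R \cdot xy$. This reduction is the key move: it replaces the general jointly-normal case by the unit-variance case covered by Haldane, at the cost of a single multiplicative constant $\sigma_X \sigma_R$.

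Next I would invoke scale-invariance of the three target quantities. Since multiplying $S_t$ by a positive constant $c$ rescales the $k$-th central moment by $c^k$, the ratios
\[
\SR = \frac{\mu_1(S_t)}{\sqrt{\mu_2(S_t)}}, \qquad \gamma_3 = \frac{\mu_3(S_t)}{\mu_2(S_t)^{3/2}}, \qquad \gamma_4 = \frac{\mu_4(S_t)}{\mu_2(S_t)^2}
\]
coincide exactly with the corresponding ratios for the standardised product $xy$. (Here I am using the Sharpe-ratio convention of the paper, $\SR = \mathrm{mean}/\mathrm{std}$, with zero risk-free rate consistent with the mean-zero assumption on $R_t$.) Hence it suffices to plug the four central moments $\mu_1,\mu_2,\mu_3,\mu_4$ from Theorem \ref{thm:Haldane} into these three ratios.

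Substituting $\mu_1=\rho$, $\mu_2=1+\rho^2$, $\mu_3=2\rho(3+\rho^2)$, and $\mu_4=3(3+14\rho^2+3\rho^4)$ yields the three claimed identities \eqref{Eqn:Sharpe}, \eqref{Eqn:Skew}, and \eqref{Eqn:Kurt} by inspection. No further algebra is required beyond writing the fractions; in particular the $\sigma_X$ and $\sigma_R$ factors cancel in every ratio, which confirms the well-known fact that the Sharpe ratio, skewness and kurtosis of a linear dynamic strategy depend only on the signal--return correlation $\rho$, not on the individual volatilities.

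There is essentially no hard step: the entire content is in Haldane's moment identities, and the only thing to be careful about is the scale-invariance bookkeeping and the mean-zero convention (so that "central" moments and "raw" moments of $xy$ agree up to the shift by $\mu_1 = \rho$ in the definitions of $\mu_2,\mu_3,\mu_4$). If anything deserves a second look, it is matching the paper's convention for Sharpe (ratio of mean to standard deviation rather than the annualised version) and confirming that Haldane's $\mu_k$ are already centred about $\rho$ rather than about zero, so that substitution into $\mathrm{SR}=\mu_1/\sqrt{\mu_2}$, $\gamma_3=\mu_3/\mu_2^{3/2}$, $\gamma_4=\mu_4/\mu_2^2$ is legitimate.
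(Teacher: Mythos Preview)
Your proposal is correct and is essentially the same argument as the paper's own proof: both reduce to Haldane's central moments for the standardised product $xy$ (Theorem~\ref{thm:Haldane}) and then form the three scale-invariant ratios, noting that the $\sigma_X\sigma_R$ factors cancel. The only cosmetic difference is that the paper writes out the $\sigma_X\sigma_R$ factors explicitly in $\mu_1$ and $\mu_2$ before cancelling, whereas you invoke scale-invariance up front.
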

In the appendix, we extend  equations (\ref{Eqn:Sharpe}) and (\ref{Eqn:Skew})   to the case of non-zero means.

\begin{proof}
A simple application of  Theorem \ref{thm:Haldane} give us the following first two moments for our strategy $S_t = X_t \cdot R_t$:
$\mu_1=E[S_t]=E[X\cdot R] = \sigma_X\sigma_R \rho $.
and $\mu_2=Var[S_t]= \sigma_X^2\sigma_R^2(\rho^2+1)$
.
Thus we can derive the following results for the Sharpe ratio,
\begin{eqnarray*}
Sharpe=& {\mu_1\over \mu_2^{1/2}} \\
=& {\sigma_X\sigma_R\rho\over \sigma_X\sigma_R \sqrt{\rho^2+1}}\\
=& {\rho\over  \sqrt{\rho^2+1}}
\end{eqnarray*} 
Moreover, we can see that the skewness,
\begin{eqnarray*}
\gamma_3  =& {\mu_3\over \mu_2^{3/2}}\\
=& {2\rho (3+\rho^2)\over (1+\rho^2)^{3/2}}
\end{eqnarray*}
Finally, the kurtosis is given by
\begin{eqnarray*}
\gamma_4 =& {\mu_4\over \mu_2^{2}}\\
=&{3(3+14\rho^2+3\rho^4)\over (1+\rho^2)^2}
\end{eqnarray*}
\end{proof}

 If we restrict our attention to positive correlations, all three dimensionless statistics are monotonically increasing in $\rho$. Consequently, strategies that maximize one of these statistics will maximize the others, although the impact of correlation upon Sharpe ratio, skewness and kurtosis is different. We illustrate the cross-dependencies in the following charts, depicting the relationships between the variables. In figure \ref{fig:SharpeSkew}, the shaded blue histograms correspond to correlation ranges ($\{[-1,-0.5],[-0.5,0],[0,0.5],\ [0.5,1]\}$). We note that a uniform distribution in correlations maps into a higher likelihood of extreme Sharpe ratios and an even higher likelihood of extreme skewness and kurtosis. 

\begin{figure}[h]
	\centering
	\includegraphics[width=\linewidth]{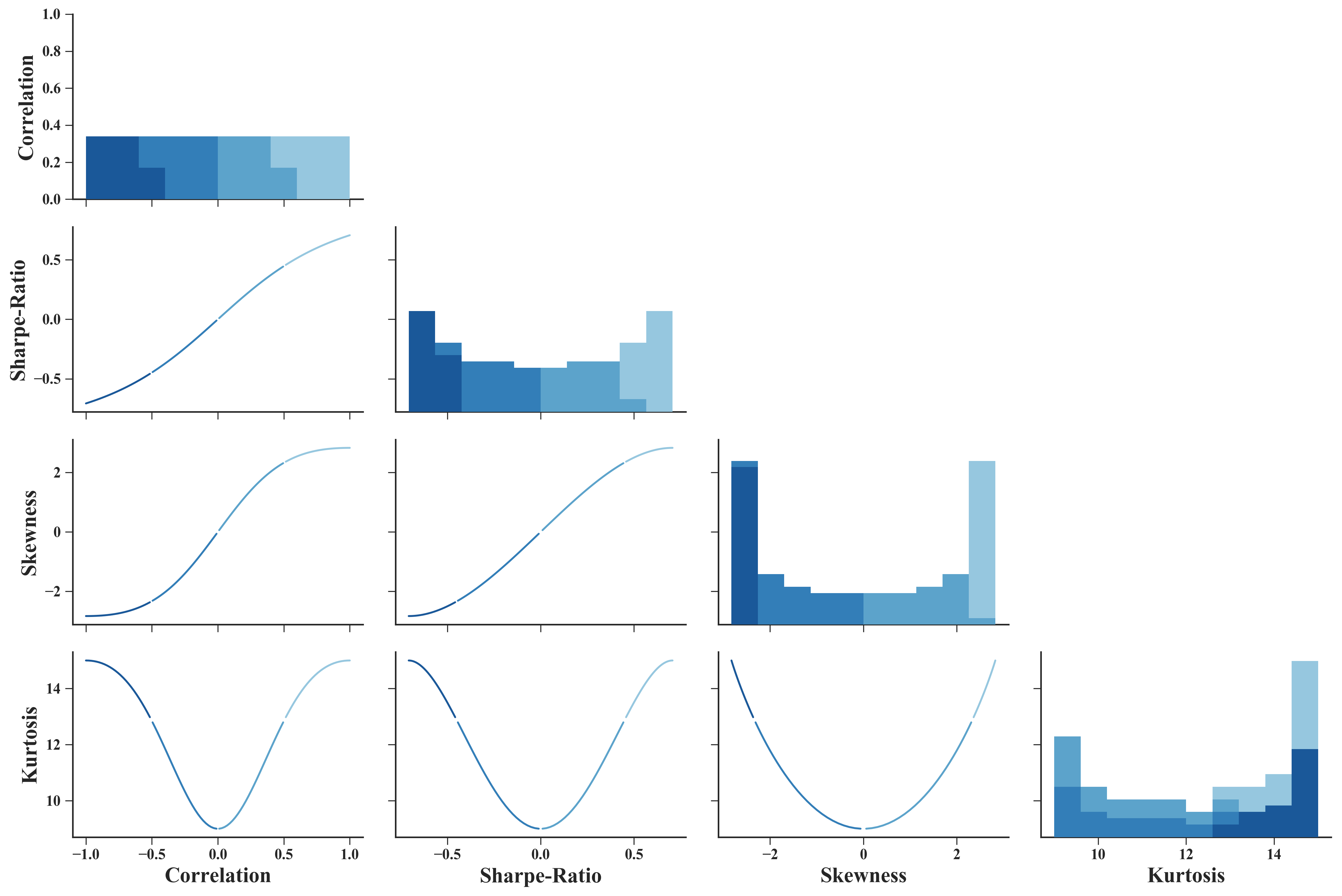}
	\caption{{\bf Correlation, Sharpe ratio, Skewness, and Kurtosis pairwise relationship.} A uniform distribution in correlation is bucketed into four ranges $\{[-1,-0.5],[-0.5,0],[0,0.5],\ [0.5,1]\}$ as depicted in the bar charts in shades of blue. After transforming the correlation into SR, $\gamma_3$ and $\gamma_4$ the frequencies are no longer uniform.} \label{fig:SharpeSkew}
\end{figure}

Skewness ranges in $[-2^{3/2},2^{3/2}]\approx[-2.8,2.8]$. Unlike the Sharpe ratio,
Skewness' dependence on correlation tends to flatten, so to achieve
90\% peak skewness, one needs only achieve a 0.60 correlation, while
for a 90\% peak Sharpe, one needs a correlation of 0.85. Kurtosis is an even function and varies from a minimal value of 9 to a maximum of 15. In practice, correlations will largely be close to zero and the resulting skewness and kurtosis significantly smaller than the maximal values.

Although we analyse  the moments of the strategy $S_t=X_t R_t$, the full product density is actually known in closed form (see appendix \ref{Section:FullDist},  \cite{Exact} and \cite{DistCorrel}). It is clear that the distribution of the strategy is {\sl leptokurtic} even when it is not predictive (when  the correlation is exactly zero, the strategy has a kurtosis of $9$). In the limit as $\rho\rightarrow 1$, the strategy's density approaches that of a non-central $\chi^2$, an effective {\sl best-case} density when considering the design of optimal linear dynamic strategies.

An optimised strategy with sufficient lags (and a means of ensuring parsimony) may be able to capture both mean-reversion and trend and result in yet higher correlations. Annualised Sharpe ratios of between 0.5-1.5 are most common (i.e., correlations of between 3\% to 9\%) for single asset strategies in this relatively low-frequency regime.


\subsection{Optimisation: Maximal Correlation, Total least squares}


Many algorithmic traders will explain how problematic  strategy optimisation is, given the  endless concerns of over-fitting, etc. Although these are a concern, the na\"{\i}ve use of strategies which are merely {\sl pulled out of thin air} is equally problematic, where there is no explicit use of optimisation (and, in its place more  {\em eye-balling} strategies or targeting Sharpe ratios rather loosely, effectively a somewhat loose mental optimisation exercise).  Practical considerations abound and real-world returns are neither Gaussian nor stationary. We argue irrespectively that using optimisation and a well-specified utility function as a starting point is a means of preventing strategies from being  just untested heuristics.  Unlike most discretionary traders' heuristics  (or {\em rules of thumb}) which have their 
place as a means of dealing with uncertainty (see for example \cite{Gigerenzer}), 
heuristic quantitative trading strategies run the risk of being entirely arbitrary, or are subject to a large number of human biases, in marked contrast to the monniker {\em quantitative} investment strategies.

\begin{figure}[h!]
\centering
	\includegraphics[width=\linewidth]{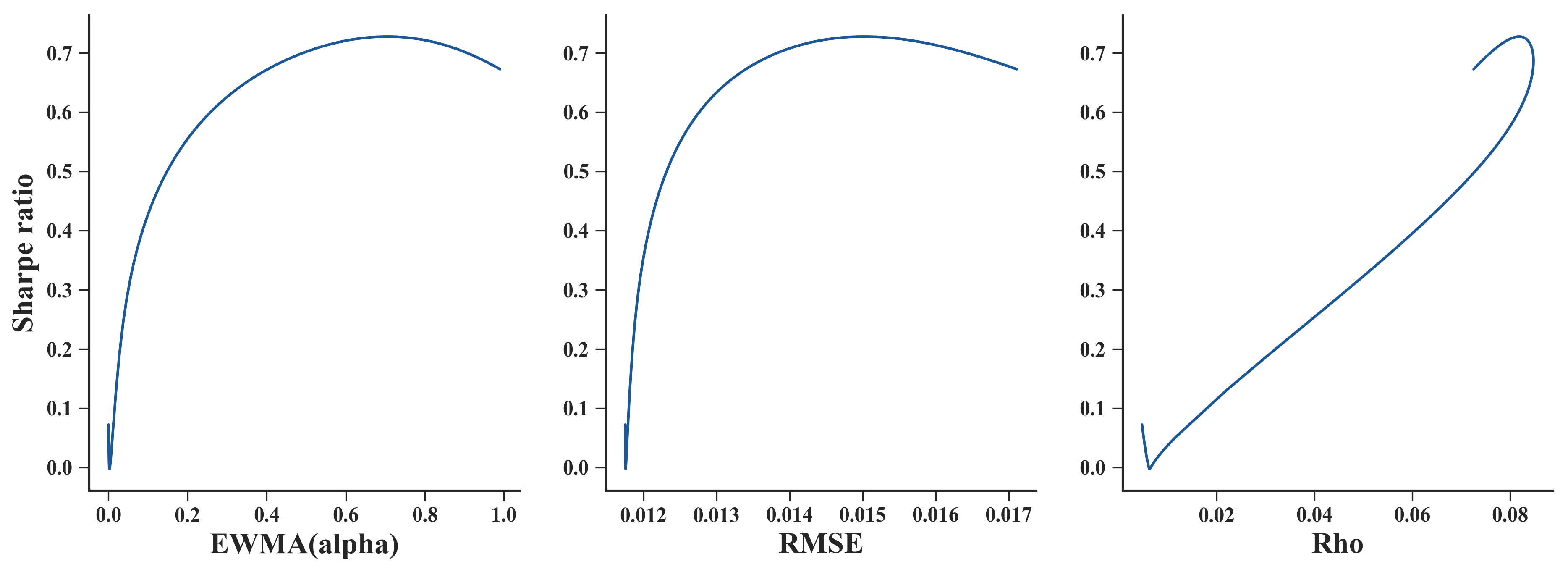}
  \caption{{\bf EWMA Strategy Sharpe Ratio vs $\alpha$, MSE and correlation} for S\&P 500 reversal strategies} \label{fig:EWMA_Sharpe_Corr}
\end{figure}

\begin{figure}[h!]
	\centering
	\includegraphics[width=\linewidth]{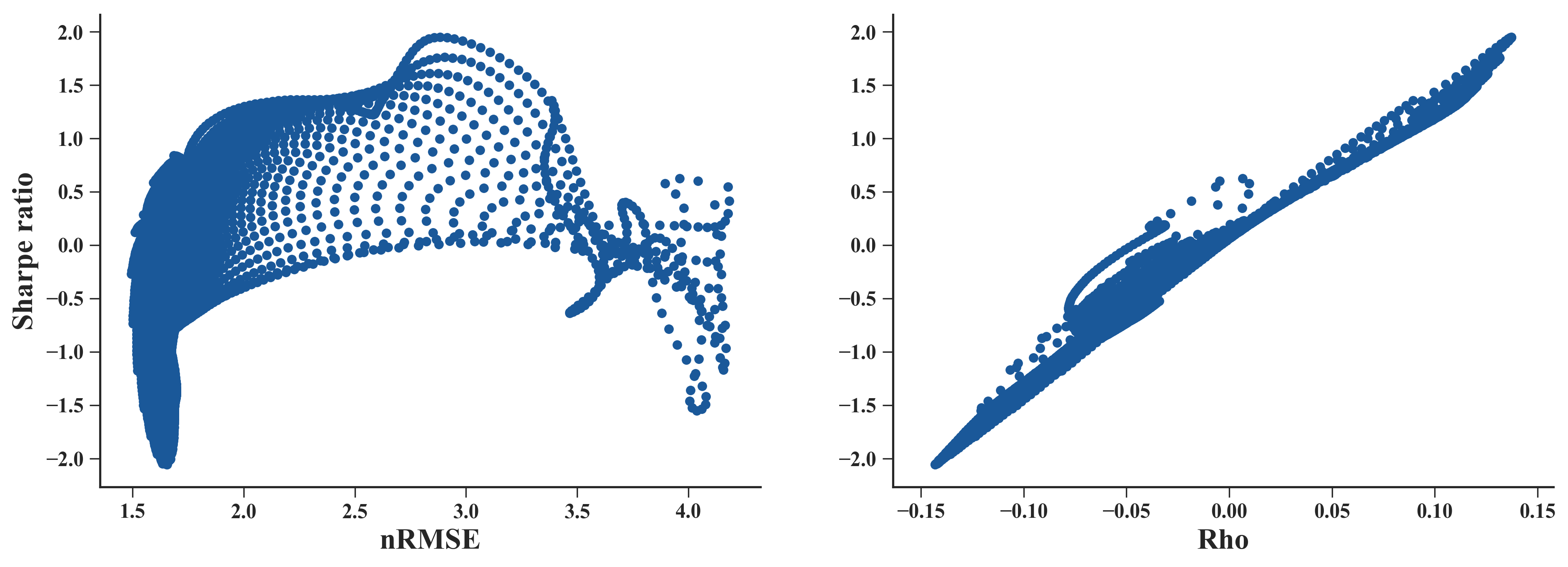}
	\caption{{\bf Holt-Winters Strategy Sharpe Ratio vs MSE and correlation} for S\&P 500 Reversal Strategies} \label{fig:HW_Sharpe_Corr}
\end{figure}

Where optimisation is used, the most common optimisation method is to minimize the mean-squared error (MSE) of the forecast. Our results show that  rather than to minimize the $\mathcal{L}^2$ norm between our signal and the forecast returns (or to maximize the likelihood), if the objective is to  maximize the Sharpe ratio, we must maximize the correlation. 

We can see in figures \ref{fig:EWMA_Sharpe_Corr} and \ref{fig:HW_Sharpe_Corr}, a depiction of fits of strategies applied to S\&P 500 using EWMA and HW filters for a variety of parameters. The relationship between MSE and Sharpe ratio is not monotone in MSE for the EWMA filter as we see in figure \ref{fig:EWMA_Sharpe_Corr}, while it is much closer to being linear in the case of the relationship between correlation and Sharpe. For the case of HW (with two parameters), in figure  \ref{fig:HW_Sharpe_Corr} any given MSE can lead to a non-unique Sharpe ratio, sometimes with a very broad range, leading us to conclude that the optimization is poorly posed. The relationship of correlation to Sharpe is obviously closer to being linear, with higher correlations almost always leading to higher Sharpe ratios.

In the case of a one-dimensional forecasting problem with (unconstrained) linear signals, optimizing the correlation amounts to using  what is known as {\sl total least squares regression (TLS)} or {\sl orthogonal distance regression}, a form of principal components regression (see, e.g., \cite{Golub} and \cite{Markovsky}).  In the multivariate case, it would be more closely related to {\sl canonical correlation analysis} (CCA). 

Unlike OLS, where the dependent variable is assumed to be measured with error and the independent variables are assumed to be measured without error, in total least squares regression, both dependent and independent variables are assumed to be measured with error, and the objective function compensates for this by minimizing the sum squared of orthogonal distances to the fitted hyperplane. This is a simple form of errors-in-variables (EIV) regression and has been studied since the late 1870s, and is most closely related to principal components analysis. For $k$ regressors, the TLS fit will produce weights which are orthogonal to the first $k-1$ principal components. 

So, if we consider the signal $X=Z\beta$ to be a linear combination of features, with $Z\in \bold{R}^k$ a $k$-dimensional feature space, then we note that
$$\hat{\beta}^{OLS} = (Z'Z)^{-1}Z'R$$ but
$$\hat{\beta}^{TLS} = (Z'Z-\sigma_{k+1}^2 I)^{-1}Z'R$$ where $\sigma_{k+1}$ is the smallest singular value for the $T\times (k+1)$ dimensional matrix $\tilde{X}=[R,Z]$ (i.e., the concatenation of the features and the returns, see, e.g., \cite{Rahman-Yu}\footnote[4]{A more common method for extracting TLS estimates is via a PCA of the concatenation matrix $\tilde{X}$, where $\hat{\beta}^{TLS}$ is chosen to cancel the least significant principal component}).It is well known that, for the case of OLS, the smooth or hat matrix $\hat{R} = M R$ is given by
$$M^{OLS} = Z(Z'Z)^{-1} Z'$$
with $\tr(M^{OLS})=k$, the number of features. 
In contrast, 
$$M^{TLS}=Z(Z'Z-\sigma_{k+1}^2 I)^{-1}Z'$$
and  effectively has a greater number of degrees of freedom than that of OLS, i.e., $$\tr(M^{TLS})\geq \tr(M^{OLS})$$ with equality only when there is complete collinearity\footnote[5]{In this case, it is also known that $\tr(M)=\tr(L)$ where $L= (Z'Z-\sigma_{k+1}^2 I)^{-1}Z'Z$ and we know that the singular values of $\sigma(L)=\{{\lambda_i^2}/{(\lambda_i^2-\sigma_{k+1}^2)}\}$ where $\lambda_i$ are the singular values of $Z$ (or correspondingly, $\lambda_i^2$ are the singular values of $Z'Z$), and $\lambda_1\geq\cdots\geq\lambda_k>0$ (\cite{Leyang}).  By the Wilkinson interlacing theorem, $\lambda_k\geq\sigma_{k+1}\geq0$ (see \cite{Rahman-Yu}). Consequently,
$$\tr(M^{TLS})=\sum_i\frac{\lambda_i^2}{(\lambda_i^2-\sigma_{k+1}^2)}\geq k=\tr(M^{OLS})$$ 
with equality iff $\sigma_{k+1}^2=0$ (i.e., when there the $R^2=100\%$ and consequently, OLS and TLS coincide). In other words,  $\tr(M^{TLS})\geq\tr(M^{OLS})$.  
}
For this reason, many people see TLS as  an {\em anti-regularisation} method and may result in less-stable response to outliers (see for example, \cite{Zhang}).  Consequently, there is extensive study of {\em regularised} TLS, typically using a weighted ridge-regression (or Tikhonov) penalty (see discussion in \cite{Zhang} for more detail on this large body of research). The stability of TLS in out-of-sample performance is an issue we broach in our study of over-fitting penalties (see \cite{Firoozye2}).


While maximizing correlation rather than minimizing the MSE seems a very minor change in objective function, the formulas differ from those of standard OLS. The end result is a linear fit which takes into account the errors in the underlying conditioning information. We believe that it should be of relatively little  consequence when the features are appropriately normalized, as is the case for univariate time-series estimation, although some authors have suggested that optimising TLS is not appropriate for prediction (see, e.g., \cite{Fuller} section 1.6.3).  When we seek to maximize the Sharpe ratio of a strategy, the objective should {\em not be} prediction, but rather optimal weight choice.

%

\subsection{Maximal Sharpe ratios, Maximal Skewness, Minimal Kurtosis}

Surprisingly, there appears to be a maximal Sharpe ratio for linear strategies. In the case of normal signals and normal returns, the maximal Sharpe ratio is that of a non-central $\chi^2$ distribution and the resulting maximal statistics are
\begin{eqnarray*}
\SR^{max}=&\frac{\sqrt{2}}{2}\approx 0.707\\
\gamma_3^{max} = &2\sqrt{2}\approx 2.828\\
\gamma_4^{max} =& 15.000
\end{eqnarray*}

While  the estimate for the Sharpe ratio may seem surprisingly low, we comment that these are for a single period, for one  single rebalancing. For a daily rebalanced strategy, if we na{\"i}vely annualize the Sharpe ratio (by a factor of $\sqrt{252}$), we get a maximal Sharpe of approximately $SR_{max}\approx 11.225$, a level generally well beyond what is attained in practice. The statistics, $\gamma^{max}_3$ and $\gamma^{max}_4$ do not scale when annualized, but are still  large irrespective of the time horizon.

We note that our assumption of normality could easily be relaxed by considering non-linear transforms of the signals $X$ with the end-result that the maximal Sharpe Ratio bounds are relaxed. While this is beyond the scope of the current paper, we note that it is easy to show that simple non-linear strategies, going long one unit if the signal is above a threshold $k$ and short one unit if it below $-k$, i.e., $f_k(X)=\mathbbm{1}_{X>k}-\mathbbm{1}_{X<k}$ can be shown to have arbitrarily large Sharpe Ratios, depending on the choice of threshold, $k$. The probability of initiating such an arbitrarily high Sharpe ratio trade  likewise decreases to being negligible. Thus, stationary returns with a small non-zero autocorrelation can lead to violations of Hansen-Jagannathan (or {\em good deal} bounds).

Noticeable as well from these formulas is that, while Sharpe and skewness may change sign, kurtosis is always bounded below and takes a minimum value of $9$ (i.e., an excess kurtosis of $6$). Normality of the resulting strategy returns is not a good underlying assumption, since the theoretical value of the Jarque-Bera test would be, at 
\begin{eqnarray*}
JB(n) &=& \frac{n-k+1}{6}(\gamma_3^2 + \frac{(\gamma_4-3)^2}{6})\\
 &\geq& \frac{(n-k+1)}{6}(\frac{36}{4})\\
 & =& 1.5(n-k+1)
 \end{eqnarray*}
and this is asymptotically $\chi^2(2)$ (i.e., rejection of normality at a 0.99 confidence interval of $JB>9.210$). Theoretically, we would need a relatively small sample to be able to reject normality.

\section{Refined Standard Errors}

\label{section:Stderr}

Given that we have closed-form estimates of a number of relevant statistics for dynamic linear strategies, it makes sense to consider the effects of estimation error upon quantities such as the Sharpe ratio. Many analysts and traders who consider dynamic strategies in practice will consider altering the strategies on an ongoing basis, and are typically in a quandary over whether the observed change in Sharpe ratio or skewness, when they make changes to their strategies, are in fact statistically significant.

\subsection{Standard Errors for Sharpe Ratios}
While there are formulas for standard errors for  Sharpe ratios of generic assets, these are not specific to Sharpe ratios generated by dynamic trading strategies, and as a consequence, there is some possibility of refining them.

We refer to \cite{Pav} for an exhaustive overview of the mechanics of Sharpe ratios, and in particular, Section 1.4, quoting many of the known results about standard errors. Specifically, we look to \cite{Lo} for large-sample estimates of standard errors for Sharpe ratios of generic assets, given the asymptotic normality of returns.  For a sample of size $N$ and IID returns, he obtains the large-sample distribution,
$$\widehat{\SR} \sim \mathscr{N}\left(\SR,\stderr_{\Lo}^2\right),$$ so a standard error, $\stderr_{\Lo} = \sqrt{(1+\frac{1}{2} \SR^2)/T}$ which he suggests should be approximated using standard error $\sqrt{(1+\frac{1}{2} \widehat{\SR}^2)/T}$.

While Lo's estimates may be appropriate for generic assets,  for Sharpe ratios derived from dynamic strategies, we have a somewhat more refined characterisation of the variability of the estimated Sharpe ratios. With correlated Gaussian signals and returns, we derive the following result

\begin{corollary}[Stderrs] For returns $R_t\sim \mathscr{N}(0,\sigma_R^2)$ and signal 
$X_t\sim \mathscr{N}(0,\sigma_X^2)$ with correlation $\rho$, and sample size $T$, the standard errors are given by 
\begin{eqnarray}
\label{eqn:stderr_sharpe}
\stderr_{\implied} &=\frac{1}{(\hat{\rho}^2+1)^{3/2}}\sqrt{\frac{1-\hat{\rho}^2}{T-2}}\\
\label{eqn:stderr_sharpe2}
&\approx (1-\widehat{\SR}^2)\sqrt{\frac{1-2\widehat{\SR}^2}{T-2}}
\end{eqnarray}
for $|\widehat{\SR}|<\sqrt{2}/2$.

\end{corollary}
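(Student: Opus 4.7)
The plan is to exploit the fact that Theorem \ref{thm:Linear} writes the Sharpe ratio as a smooth, strictly monotone function of the correlation, $\SR = g(\rho)$ with $g(\rho) = \rho/\sqrt{1+\rho^2}$, so $\widehat{\SR}$ is just the plug-in transform of the sample correlation $\hat\rho$. Both expressions in the corollary then fall out of the delta method applied to the standard error of $\hat\rho$, with the only genuine calculation being (i) the derivative $g'(\rho)$ and (ii) the algebraic reparameterization back in terms of $\widehat{\SR}$. Differentiating once,
\[
g'(\rho) \;=\; (1+\rho^2)^{-1/2} - \rho^2(1+\rho^2)^{-3/2} \;=\; \frac{1}{(1+\rho^2)^{3/2}}.
\]

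Next I would invoke the classical standard error for Pearson's sample correlation from a bivariate Gaussian sample of length $T$, namely $\stderr(\hat\rho) \approx \sqrt{(1-\rho^2)/(T-2)}$, which is the one implicit in the usual $t$-test $t = \hat\rho\sqrt{(T-2)/(1-\hat\rho^2)}$. The joint normality of $(X_t, R_t)$ is exactly the hypothesis of Theorem \ref{thm:Linear}, so this applies directly. The delta method then gives
\[
\stderr(\widehat{\SR}) \;\approx\; |g'(\hat\rho)|\,\stderr(\hat\rho) \;=\; \frac{1}{(1+\hat\rho^2)^{3/2}}\sqrt{\frac{1-\hat\rho^2}{T-2}},
\]
which is (\ref{eqn:stderr_sharpe}). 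To pass to (\ref{eqn:stderr_sharpe2}) I would invert $\SR = \rho/\sqrt{1+\rho^2}$ to obtain $\rho^2 = \SR^2/(1-\SR^2)$, so that $1+\rho^2 = 1/(1-\SR^2)$ and $1-\rho^2 = (1-2\SR^2)/(1-\SR^2)$. Substituting and collecting powers of $(1-\widehat{\SR}^2)$ yields
\[
\stderr_{\implied} \;\approx\; (1-\widehat{\SR}^2)\sqrt{\frac{1-2\widehat{\SR}^2}{T-2}}.
\]
The range restriction $|\widehat{\SR}| < \sqrt{2}/2$ is exactly what is needed to keep $1-2\widehat{\SR}^2 > 0$ under the square root, and it corresponds to $\hat\rho^2 < \infty$ being finite after inversion; beyond this range the first form (\ref{eqn:stderr_sharpe}) remains valid but the rewrite in terms of $\widehat{\SR}$ breaks down.

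The only genuinely subtle point — rather than a technical obstacle — is which standard error one assigns to $\hat\rho$. The form $\sqrt{(1-\rho^2)/(T-2)}$ is the $t$-statistic calibration and is what the subscript ``implied'' is meant to convey; an alternative would be to use Fisher's $z$-transform, giving the asymptotic variance $(1-\rho^2)^2/T$, which is asymptotically equivalent but produces slightly different finite-sample constants. Choosing the $t$-based version is what makes the algebra collapse cleanly into the displayed expressions, and is consistent with how $\widehat{\SR}$ would be tested in practice via a hypothesis test on the underlying correlation. Everything after that choice is bookkeeping, so the proof is essentially a one-line application of the delta method followed by the reparameterization above.
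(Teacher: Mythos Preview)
Your proposal is correct and follows essentially the same route as the paper: invoke the Sheppard/Pearson standard error $\sqrt{(1-\hat\rho^2)/(T-2)}$ for the sample correlation under bivariate normality, apply the delta method via $g'(\rho)=(1+\rho^2)^{-3/2}$ to obtain (\ref{eqn:stderr_sharpe}), and then invert $\SR=\rho/\sqrt{1+\rho^2}$ to rewrite the result as (\ref{eqn:stderr_sharpe2}). Your additional remarks on the choice of $\stderr(\hat\rho)$ and the role of the constraint $|\widehat{\SR}|<\sqrt{2}/2$ are accurate and slightly more explicit than the paper's own treatment.
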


\begin{proof}
As is well known, for a bivariate Gaussian process of sample size $T$, the distribution for the sample ({\em Pearson}) correlation is given by
\be\hat{\rho} \sim f_\rho(\hat{\rho})=
\frac{(T-2)(1-\rho^2)^{(T-1)/2}(1-\hat{\rho}^2)^{(T-4)/2}}{\pi}
\int_0^\infty\frac{dw}{(\cosh(w)-\rho\hat{\rho})^{T-1}}\label{Eqn:rhohat}\ee

The  standard errors which approximate those in equation (\ref{Eqn:rhohat}) for $\hat{\rho}$ are 
$$\stderr_{\rho}=\sqrt{\frac{1-\hat{\rho}^2}{T-2}}$$ 
(attributed to Sheppard, and used by Pearson, see, e.g., \cite{Hald}). Taken together with the results of Theorem \ref{thm:Linear}, we apply the delta method to find that the resulting standard errors for our plug-in estimate for the Sharpe ratio,
$\widehat{\SR}=\frac{\hat{\rho}}{\sqrt{\hat{\rho}^2+1}}$ is given by

\begin{eqnarray*}
\stderr_{\implied}&=&
\frac{\partial \widehat{\SR}}{\partial \hat{\rho}}\cdot \stderr_\rho\\
&=&\frac{1}{(\hat{\rho}^2+1)^{3/2}}\sqrt{\frac{1-\hat{\rho}^2}{T-2}}.
\end{eqnarray*}
which gives us equation (\ref{eqn:stderr_sharpe}). If we solve for $\hat{\rho}$ in terms of $\widehat{\SR}$, we are able to derive equation (\ref{eqn:stderr_sharpe2}).
\end{proof}

We note that in spite of the fact that Lo's standard errors are very near our estimates for large sample size, the entire sampling distribution from our estimates are much more concentrated than the $\mathscr{N}(0,\stderr_{\Lo}^2)$, potentially leading to tighter confidence intervals at the 99\% or higher confidence levels. We can see that the tail of the distribution given by Lo is much fatter than ours, in figure (\ref{fig:ViolinPlot}).

\begin{figure}[h!]
	\centering
	\subfloat{\includegraphics[width=0.33\linewidth]{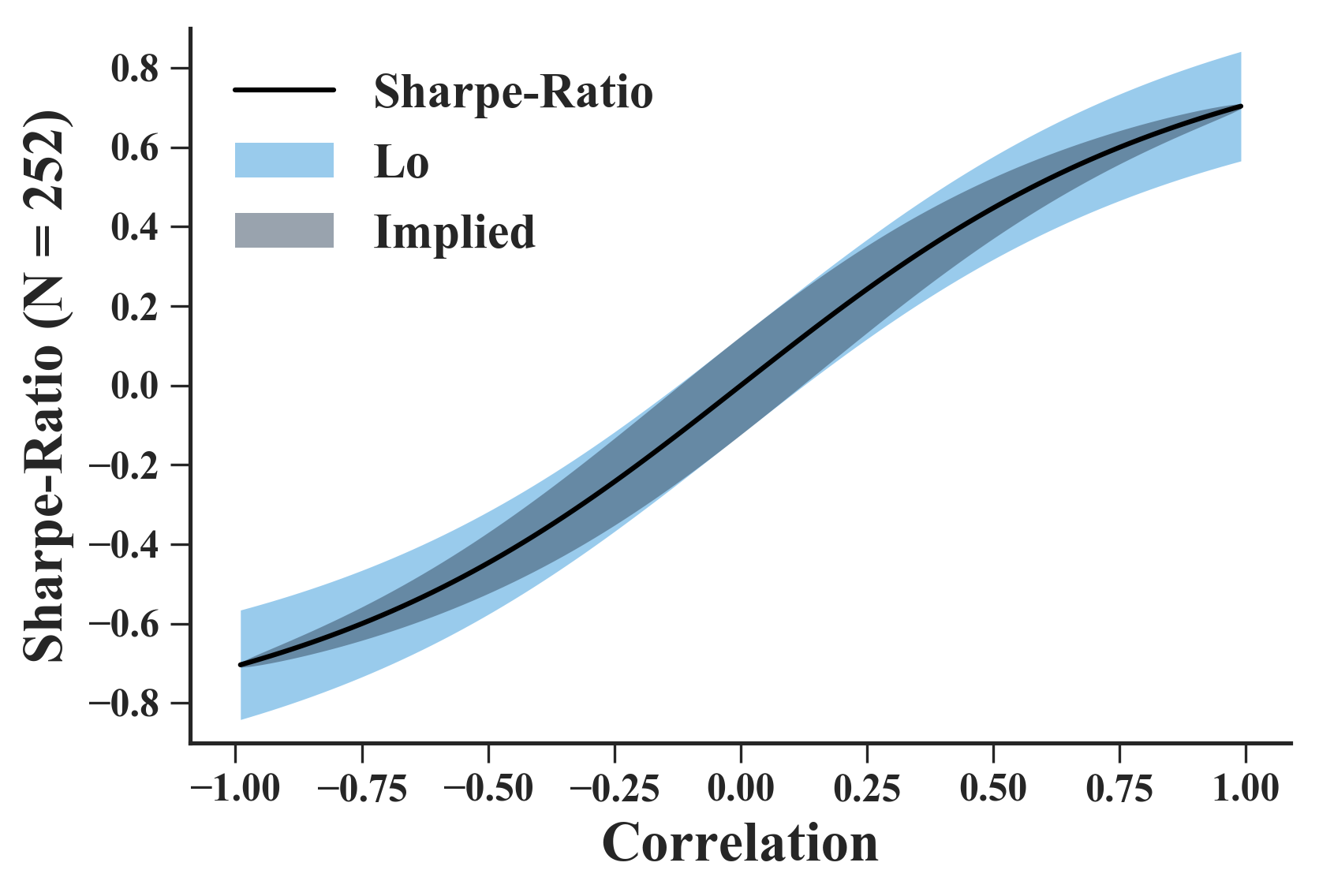}}
	\subfloat{\includegraphics[width=0.33\linewidth]{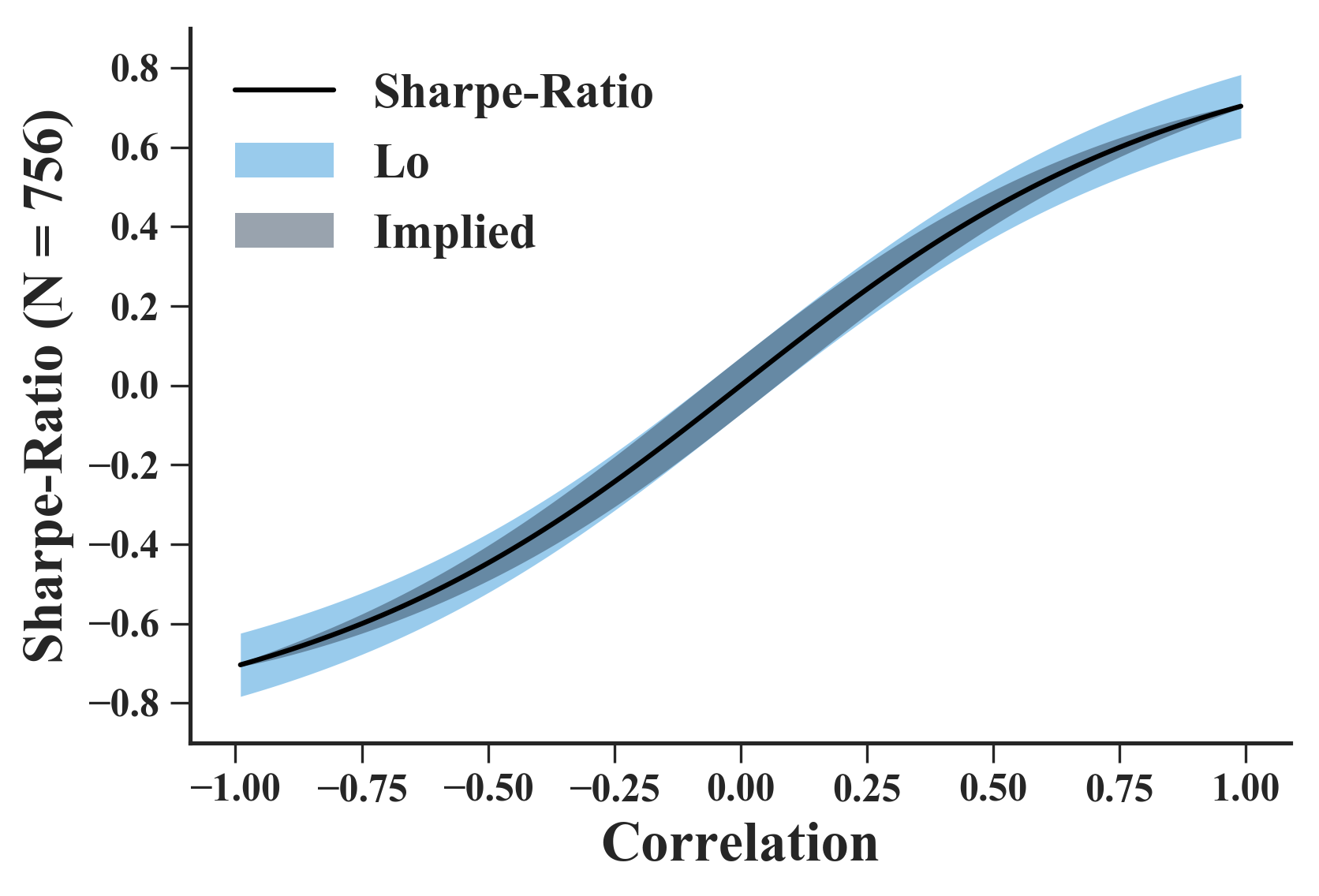}}
	\subfloat{\includegraphics[width=0.33\linewidth]{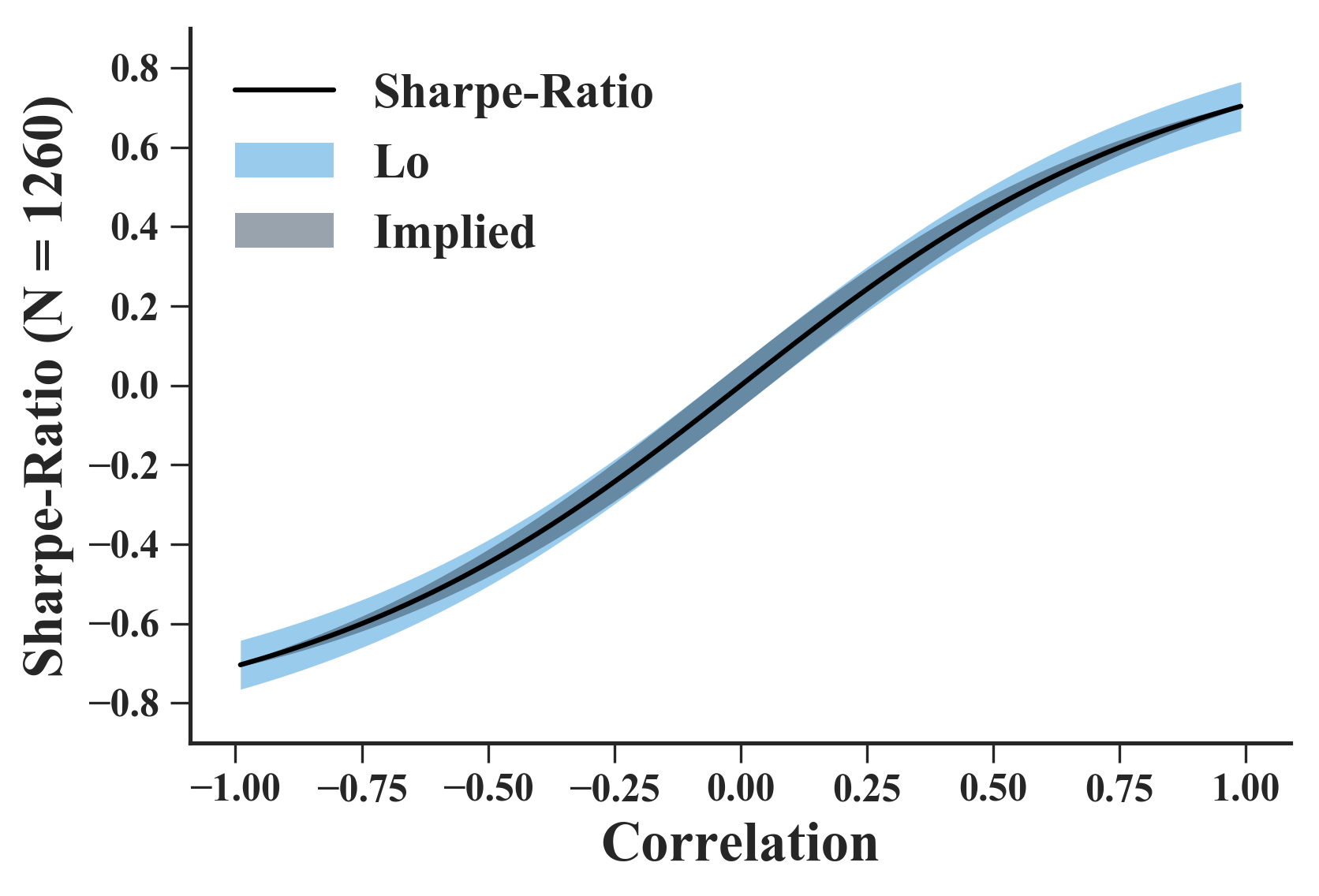}}
	\caption{{\bf Sharpe ratio and Confidence Interval Comparisons, based on different sample sizes.} We note that the {\em implied} confidence intervals are within Lo's, although primarily for larger predictive power.} \label{Lo-Us} \label{fig:CI1yr}
\end{figure}

\begin{figure}[h!]
	\centering
	\includegraphics[width=\textwidth]{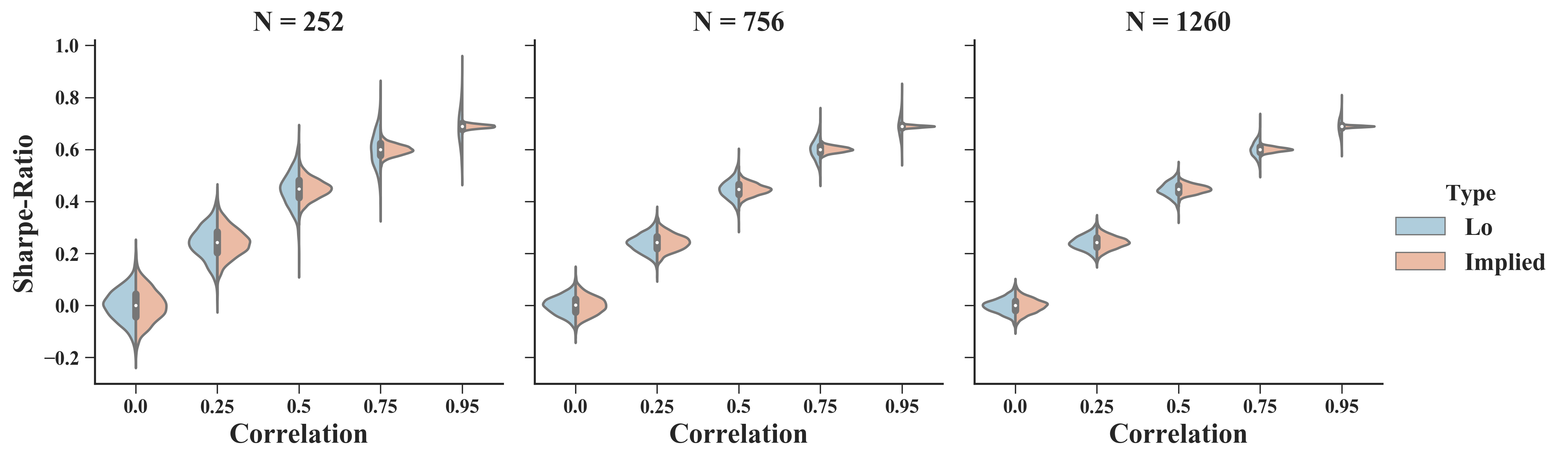}
	\caption{{\bf Sharpe ratios full distribution} While the $95^{th}$ percentile shows close agreement between Lo's large-sample standard errors and {\em implied} standard errors, the distribution of {\em implied} is far more fat-tailed.}
	\label{fig:ViolinPlot}
\end{figure}

\begin{figure}
	\centering
	\subfloat{\includegraphics[width=0.33\linewidth]{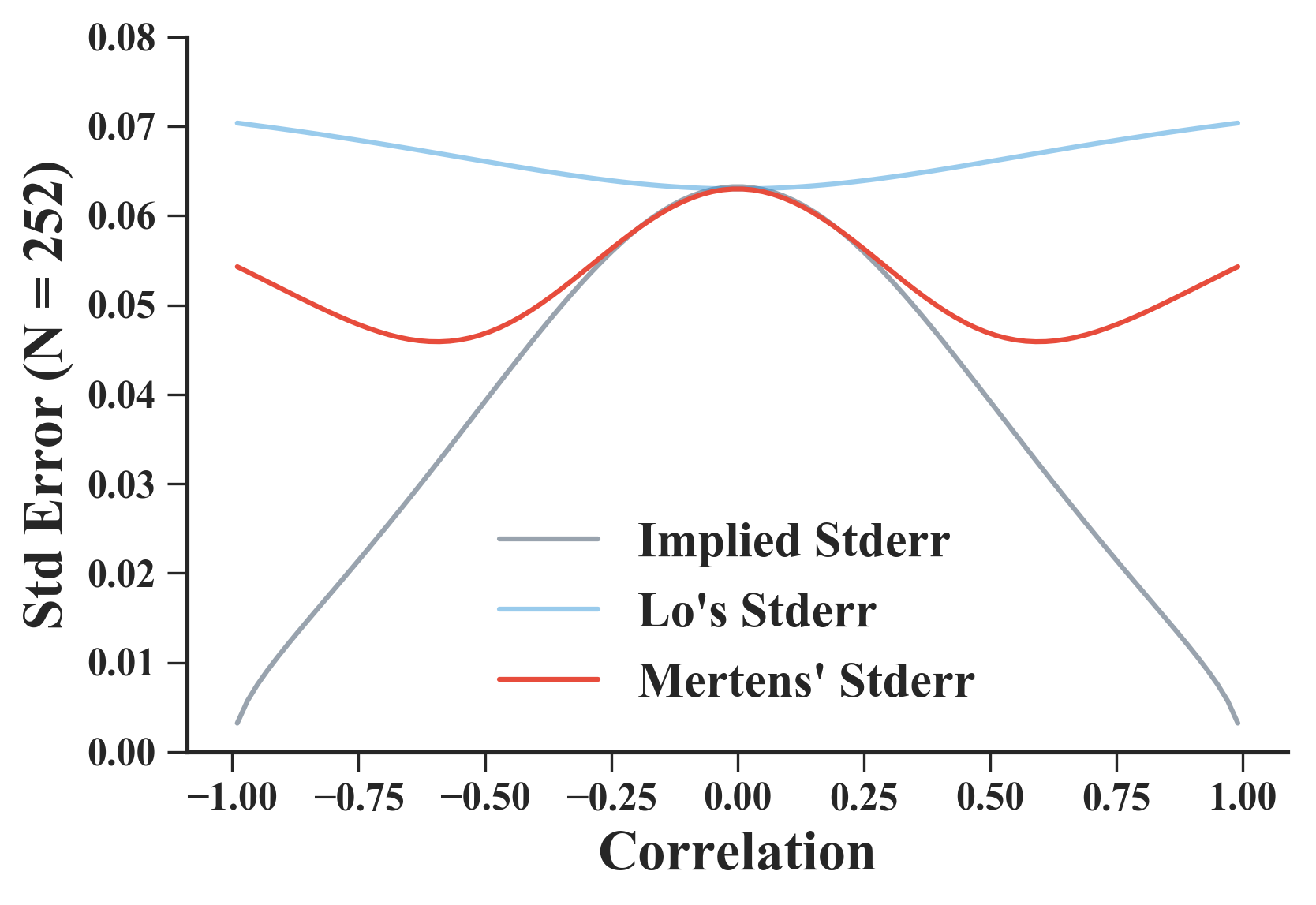}}
	\subfloat{\includegraphics[width=0.33\linewidth]{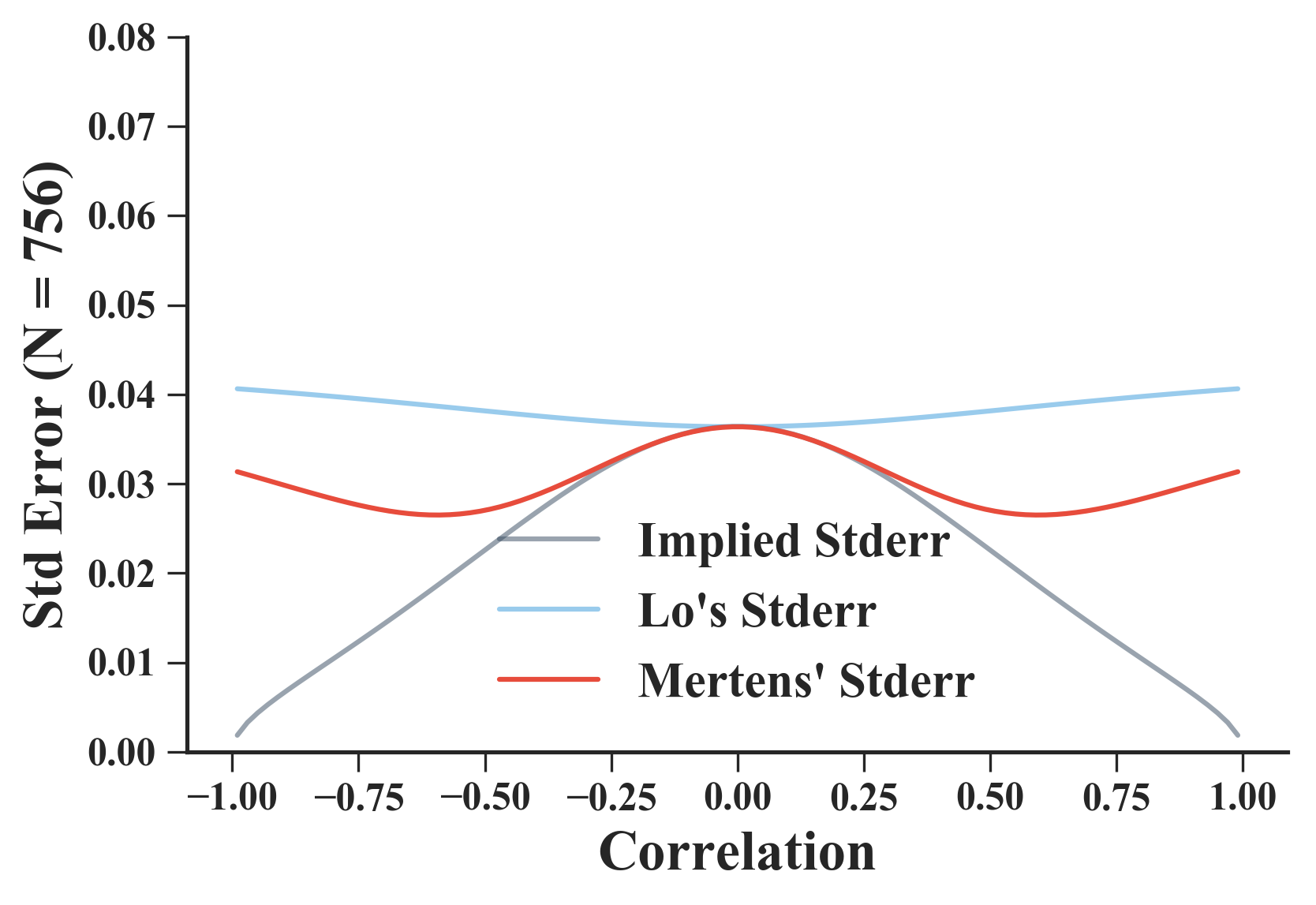}}
	\subfloat{\includegraphics[width=0.33\linewidth]{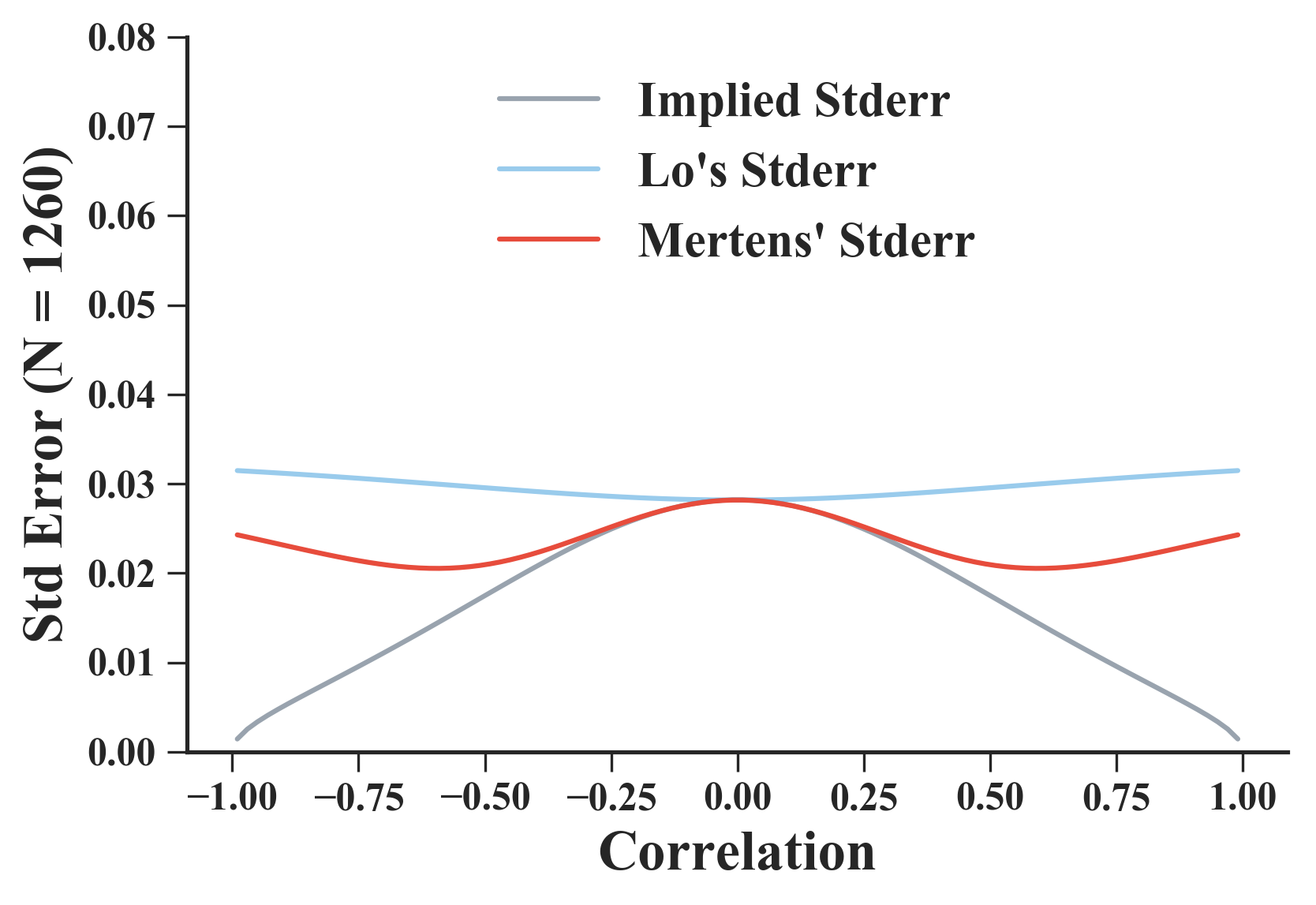}}
	\caption{{\bf Standard errors based on different sample sizes and formulas.} Ignoring parameter uncertainty, Merten's adjustment to Lo's standard errors improves standard errors to be nearly as tight as {\em implied}. In practice, parameter uncertainty hurts the performance.
		\label{fig:StdErrMertensLoImplied}}
\end{figure}

Mertens gives a refinement of Lo's result (\cite{Mertens})  by including adjustments for skewness and excess kurtosis:
\be \stderr^2_{\Mertens} = \bigl(1+\frac{1}{2}\hat{\SR}^2-\gamma_3\cdot\hat{\SR}+\frac{\gamma_4-3}{4}\cdot\hat{\SR}^2\bigr).
\label{Eqn:Mertens} \ee
If we use our plug-in estimates for skewness and excess kurtosis (i.e., coming from equations (\ref{Eqn:Skew} and \ref{Eqn:Kurt})) into equation (\ref{Eqn:Mertens}) we are able to find a modestly tighter estimate of the standard error than Lo. For most smaller amplitude correlations, this estimate comes very close to our estimate of standard error (see figure (\ref{fig:StdErrMertensLoImplied})) and for small $N$ and low correlations, Lo's standard errors are in fact tighter.  For large correlations, our standard errors are significantly tighter. For large sample sizes, there is little difference between them. Using our estimates for $\gamma_3$ and $\gamma_4$, Mertens' approximation is always tighter than Lo's; in particular for correlations $|\rho|<0.5$, Mertens' approximation appears almost identical to our own. Irrespective, we argue in section \ref{Section:Gauss vs Product} that our standard errors are more appropriate for dynamic strategies if there is any significant difference between the measures.

\subsection{Standard Errors for Higher Moments}
Using exactly the same procedure, we can easily derive standard errors for both skewness and kurtosis. In terms of classical confidence intervals, we consider \cite{Joanes} and \cite{Cramer}   which apply to Gaussian (and non-Gaussian distributions), noting that   \cite{Lo} is a broader result on the large-sample limits of Sharpe Ratios. We are concerned  with Pearson skewness and  kurtosis, i.e., 
\begin{eqnarray*}
\gamma_3 &= \frac{\mu_3}mu_2^{3/2}\\
\gamma_4 &= \frac{\mu_4}{\mu_2^2}
\end{eqnarray*}
although it is not hard to consider other definitions of skewness and kurtosis using unbiased estimators of the moments as are given in \cite{Joanes}, in this case originally from \cite{Cramer}.
Given these definitions, under the assumption of normality for the underlying returns (or correspondingly, using large-sample limits) where the sample size is $T$, standard errors are given as
\begin{eqnarray*}
\stderr_{\gamma_3} &= \sqrt{\frac{6(T-2)}{(T+1)(T+3)}}\\
\stderr_{\gamma_4} &= \sqrt{\frac{24 T (T-2)(T-3)}{(T+1)^2(T+3)(T+5)}}
\end{eqnarray*}

In the case of dynamic strategies, using our assumption of normal signal and normal returns, we are able to derive the following:

\begin{corollary}[Higher moment standard errors]
\label{higherStderr} 
For returns $R_t\sim \mathscr{N}(0,\sigma_R^2)$ and signal 
$X_t\sim \mathscr{N}(0,\sigma_X^2)$ with correlation $\rho$, and sample size $T$, the standard errors are given by\footnote[6]{ While $\rho$ can be expressed in terms of either $\gamma_3$ or $\gamma_4$ to eliminate $\rho$ from these expressions, unlike the case of the standard errors of the Sharpe ratio, the expressions are too complicated to be that useful.}
 
\begin{eqnarray*}
\stderr_{\gamma_3} &=-\frac{6(\hat{\rho}^2-1)}{(\hat{\rho}^2+1)^{5/2}}\cdot\sqrt{\frac{1-\hat{\rho}^2}{T-2}}
\end{eqnarray*}
and
\begin{eqnarray*}
\stderr_{\gamma_4} &=-\frac{48\hat{\rho}(\hat{\rho}^2-1)}{(\hat{\rho}^2+1)^{3}}\cdot\sqrt{\frac{1-\hat{\rho}^2}{T-2}}
\end{eqnarray*}
for $|\hat{\rho}|<1$.

\end{corollary}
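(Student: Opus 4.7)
The plan is to mirror exactly the argument used in the previous corollary for the Sharpe ratio standard error, simply replacing the Sharpe-ratio-as-a-function-of-$\rho$ by the skewness and kurtosis functions supplied in Theorem \ref{thm:Linear}. That is, I will use the closed-form expressions
\[
\gamma_3(\rho)=\frac{2\rho(3+\rho^{2})}{(1+\rho^{2})^{3/2}},\qquad
\gamma_4(\rho)=\frac{3(3+14\rho^{2}+3\rho^{4})}{(1+\rho^{2})^{2}},
\]
plug in the sample correlation $\hat\rho$, and then pass the variability of $\hat\rho$ through each map by the delta method. The starting variance of $\hat\rho$ is Sheppard's classical large-sample approximation $\stderr_\rho=\sqrt{(1-\hat\rho^2)/(T-2)}$ coming from the Pearson density in equation (\ref{Eqn:rhohat}) that was already invoked in the proof of the previous corollary.

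First I would record
\[
\stderr_{\gamma_i}\;=\;\Bigl|\tfrac{\partial \gamma_i}{\partial \rho}\Bigr|\cdot \stderr_\rho,\qquad i=3,4,
\]
so the only work left is to compute the two derivatives. For $\gamma_3$, writing the function as a quotient and differentiating, the numerator collapses to $6(1-\rho^2)(1+\rho^2)^{1/2}$ after expansion and cancellation, giving
\[
\frac{\partial \gamma_3}{\partial \rho}=\frac{6(1-\rho^{2})}{(1+\rho^{2})^{5/2}}.
\]
For $\gamma_4$, the analogous quotient-rule computation yields a numerator of $48\rho(1-\rho^2)(1+\rho^2)$ after cancellation, so that
\[
\frac{\partial \gamma_4}{\partial \rho}=\frac{48\rho(1-\rho^{2})}{(1+\rho^{2})^{3}}.
\]
Multiplying each derivative (evaluated at $\hat\rho$) by $\stderr_\rho$ yields precisely the two displayed expressions in the statement, once the sign is absorbed into the $-(\hat\rho^{2}-1)$ and $-\hat\rho(\hat\rho^{2}-1)$ forms that appear there.

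There are no real conceptual obstacles; the main care needed is bookkeeping in the two algebraic cancellations, especially making sure that the $(1+\rho^2)^{1/2}$ factor comes out cleanly in the $\gamma_3$ derivative and that the cross terms $84\rho^3 + 36\rho^3$ and $168\rho^3$ in the $\gamma_4$ computation line up correctly. The only genuinely substantive step is the invocation of the delta method, which is legitimate because $\gamma_3$ and $\gamma_4$ are smooth functions of $\rho$ on $|\rho|<1$, and because the asymptotic normality of $\hat\rho$ with variance $(1-\rho^2)^2/T$ for a bivariate Gaussian sample is classical; this is also what justified the Sharpe-ratio corollary and so can be reused here without further argument. The restriction $|\hat\rho|<1$ in the statement is exactly the regime in which both derivatives (and $\stderr_\rho$ itself) are well defined, so no further side conditions arise.
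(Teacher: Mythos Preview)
Your proposal is correct and follows essentially the same route as the paper: both apply the delta method, writing $\stderr_{\gamma_k}=\bigl|\partial\gamma_k/\partial\rho\bigr|\cdot\stderr_\rho$ with Sheppard's $\stderr_\rho=\sqrt{(1-\hat\rho^{2})/(T-2)}$, and then compute the two derivatives of the closed-form expressions from Theorem~\ref{thm:Linear}. Your algebraic checks on the cancellations match the derivatives the paper records in equations~(\ref{Eqn:SkewKurtStderr}).
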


We rely on the delta-method, recognizing that 
$\stderr_{\gamma_k} =\partial{\gamma_k}/\partial{\rho}\cdot \stderr_{\rho}$ for $k=3,4$. Given the following easily calculated derivatives:
\begin{eqnarray}
\label{Eqn:SkewKurtStderr}
\frac{\partial \gamma_3}{\partial \rho} & = -\frac{6(\rho^2-1)}{(\rho^2+1)^{5/2}}\\
\frac{\partial \gamma_4}{\partial \rho} & = -\frac{48\rho(\rho^2-1)}{(\rho^2+1)^{3}}
\end{eqnarray}

As we can tell from the formulas in corollary (\ref{higherStderr}), the derived standard errors for both skewness and kurtosis collapse to zero when $\rho=1$.

\begin{figure}
	\centering
	\subfloat{\includegraphics[width=0.33\linewidth]{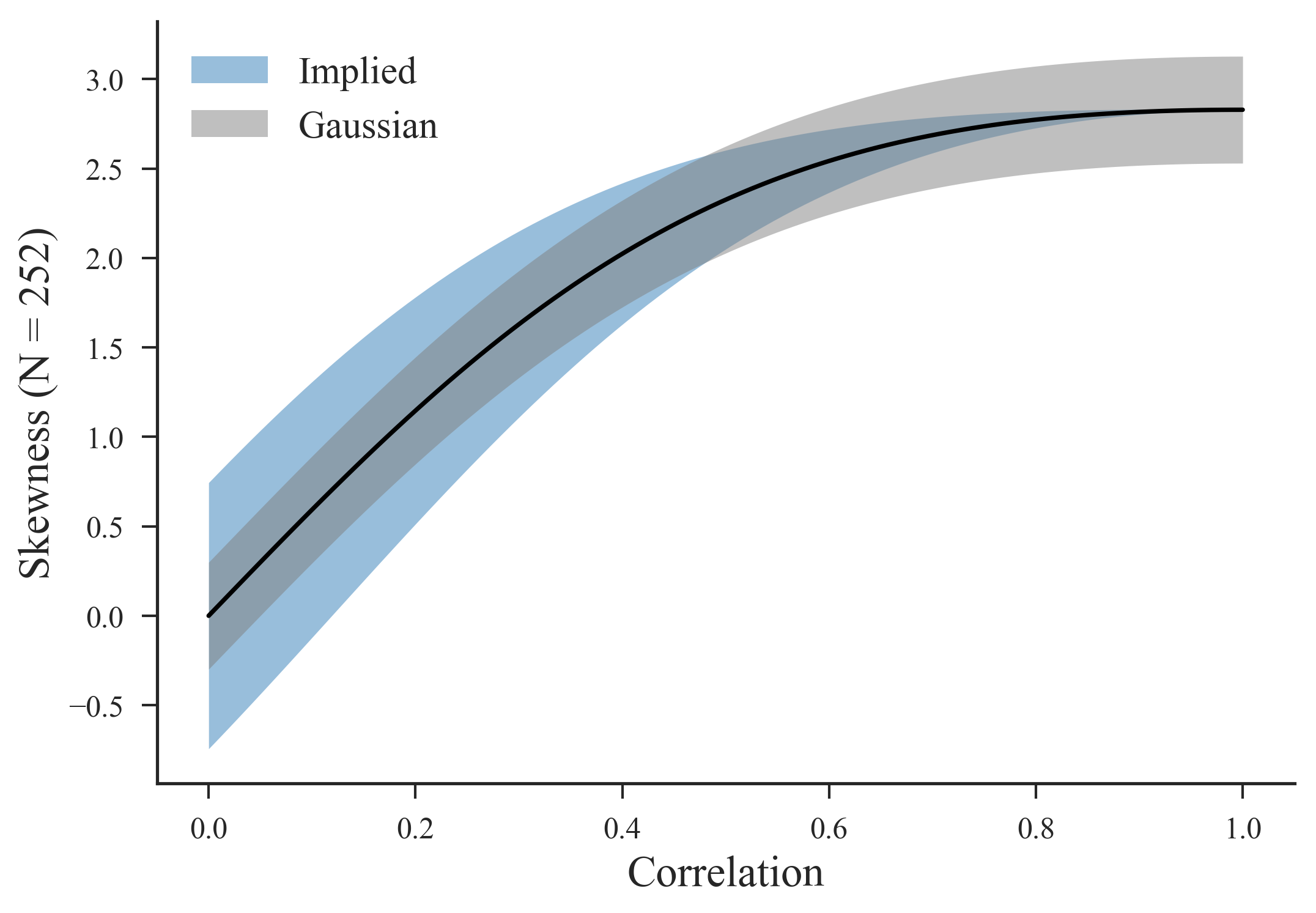}}
	\subfloat{\includegraphics[width=0.33\linewidth]{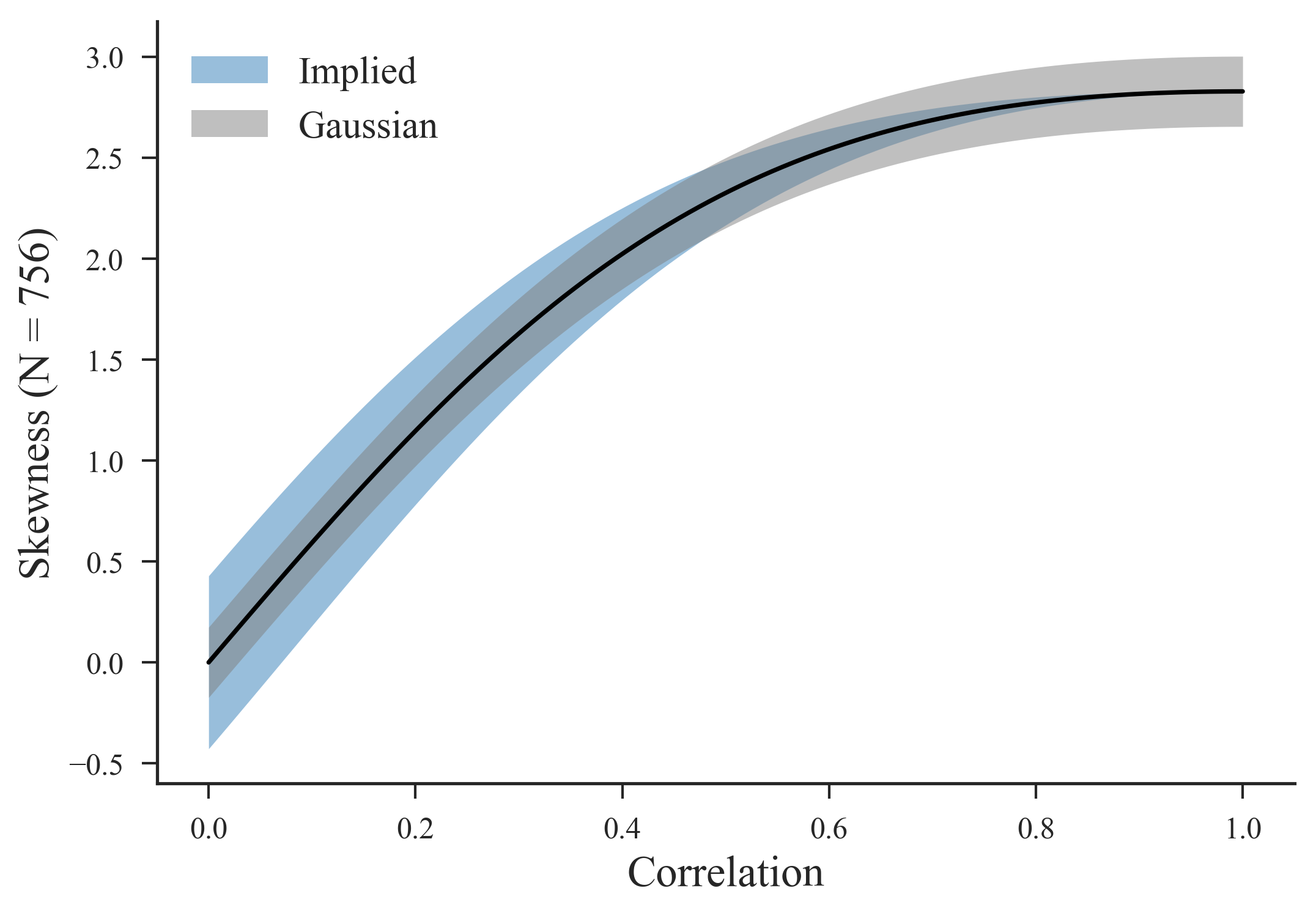}}
	\subfloat{\includegraphics[width=0.33\linewidth]{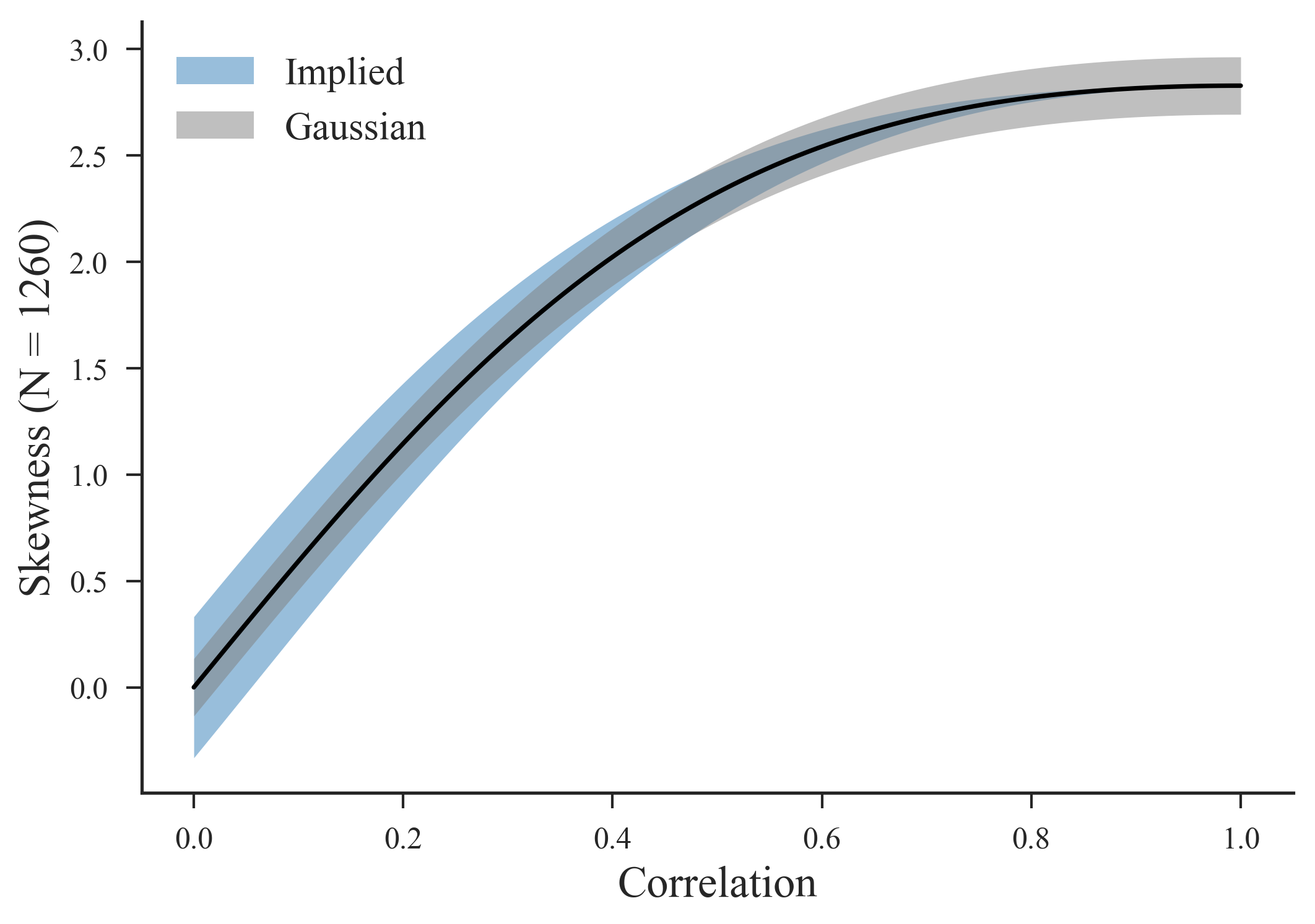}}
	\caption{{\bf Standard errors for skewness for different sample sizes, implied vs Gaussian} Implied standard errors, especially for skewness are generally larger than those for normal distributions. We argue that the   implied standard errors are more appropriate for dynamic strategies.
		\label{fig:StdErrSkewGauss} \label{fig:SkewCI1yr} }
\end{figure}

\begin{figure}
	\centering
	\subfloat{\includegraphics[width=0.33\linewidth]{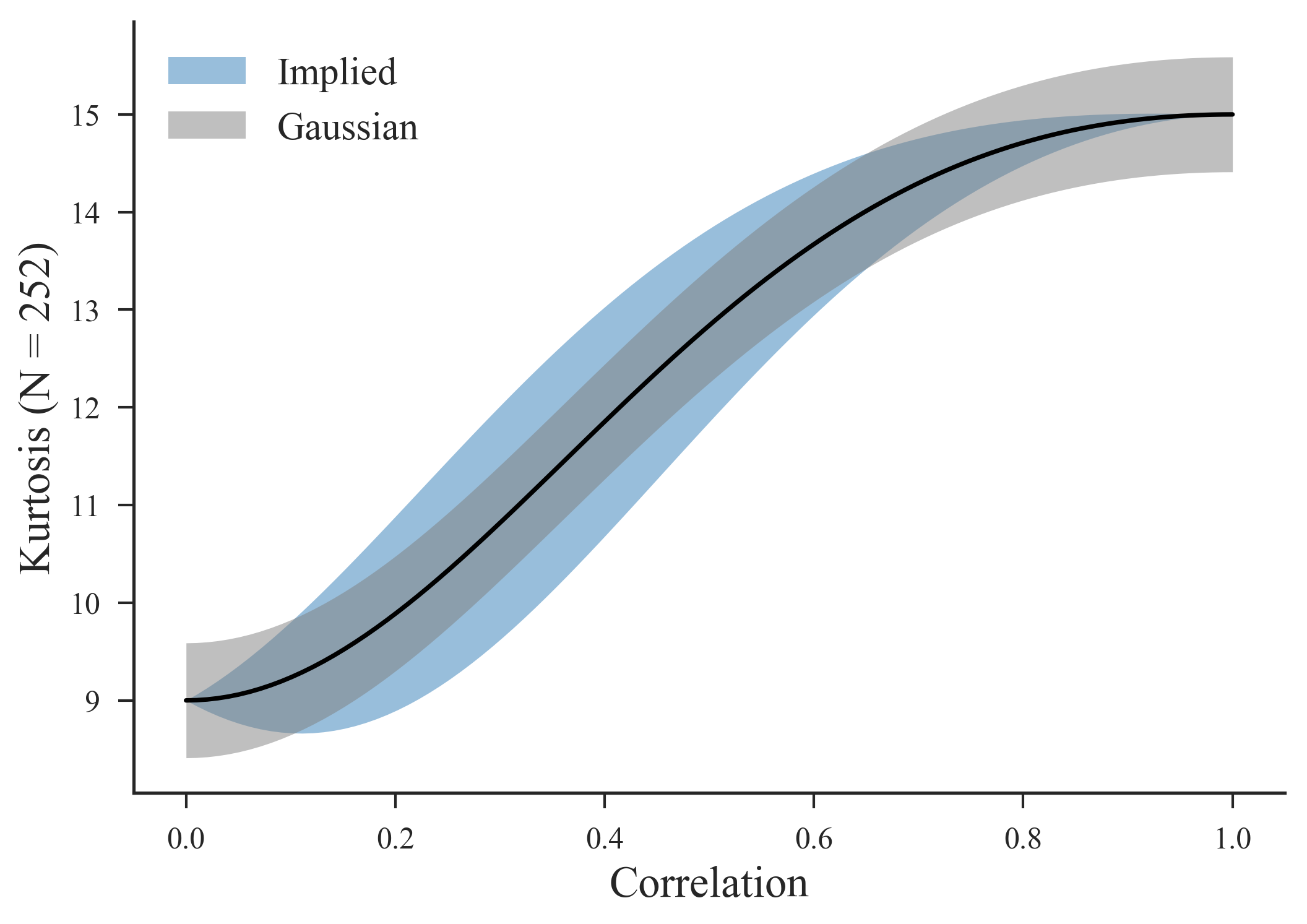}}
	\subfloat{\includegraphics[width=0.33\linewidth]{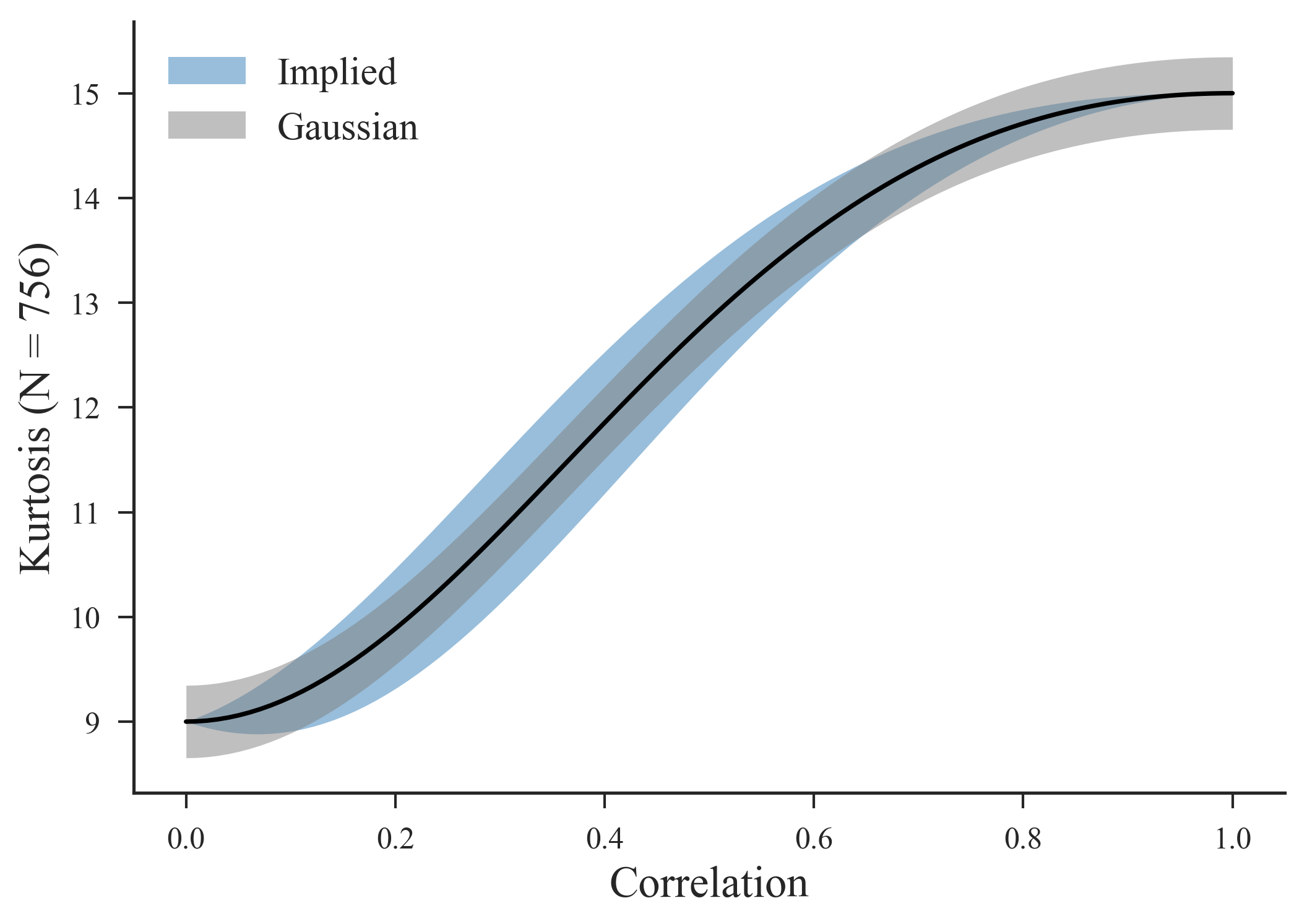}}
	\subfloat{\includegraphics[width=0.33\linewidth]{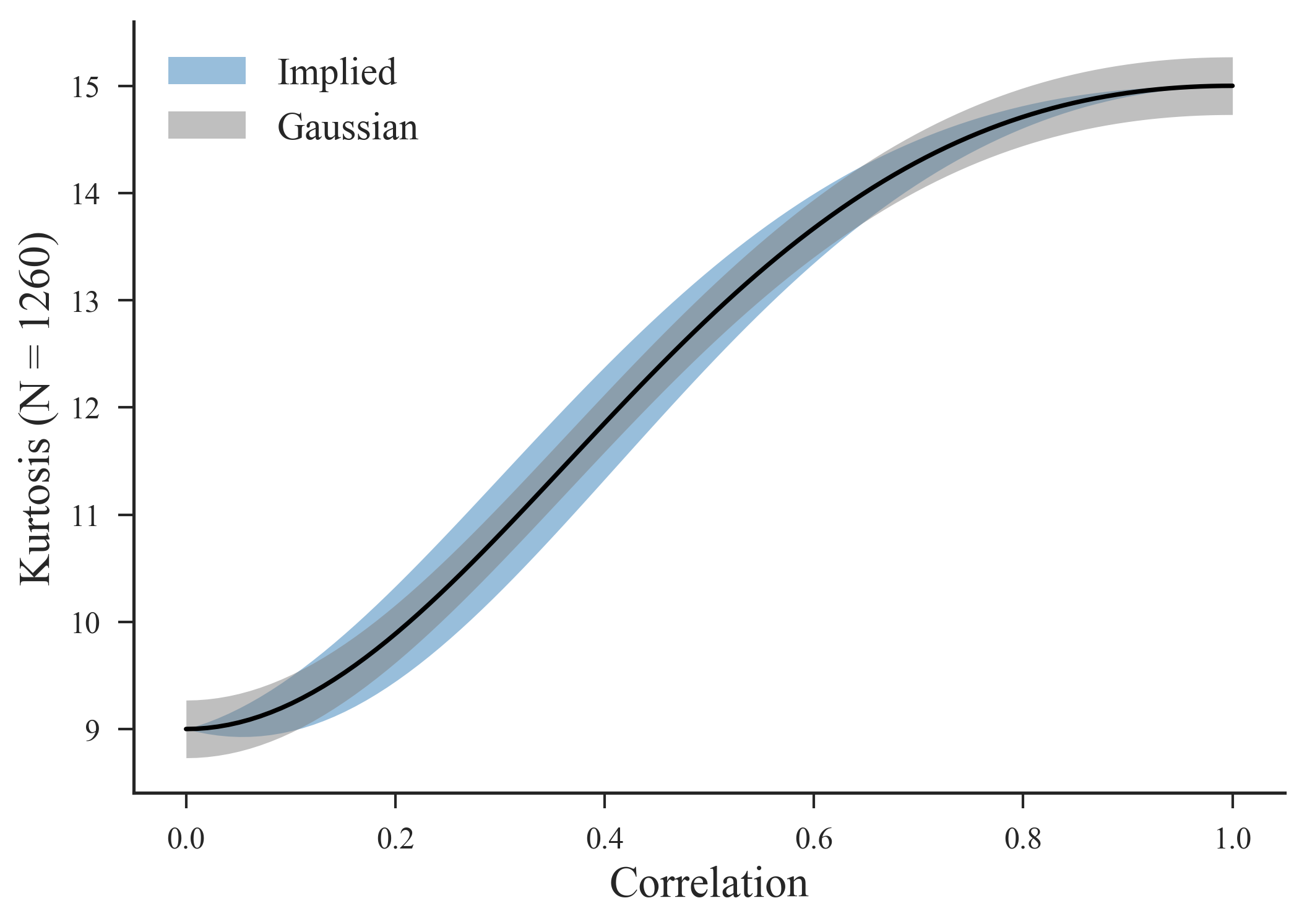}}
	\caption{{\bf Standard errors for kurtosis for different sample sizes, implied vs Gaussian} Implied  kurtosis standard errors are sometimes larger and sometimes tighter than the Gaussian case. We argue that the  {\em implied} standard errors are more appropriate for dynamic strategies.} \label{fig:StdErrKurtGauss} \label{fig:KurtCI1yr}
\end{figure}


While we can solve for $\rho$ in terms of $\gamma_k$ for $k=3,4$, the formulas are not easy to present (especially for kurtosis) and we believe that the statement, in terms of correlation is easier to use.

We note that, unlike the argument for using our refined standard errors over those presented in \cite{Lo}, the rationale for using the skewness and kurtosis standard errors presented in equations (\ref{Eqn:SkewKurtStderr}) is that returns are, for most practical purposes, not close to normal, and the product of two normals is more relevant for dynamic strategies. We elaborate on this in Section \ref{Section:Gauss vs Product}.


\section{Multiple assets}

We consider whether there is a diversification benefit from adding more independent {\sl bets} to our portfolio, and to what extent we can benefit from this. For context we note that portfolios of dynamic strategies can behave very differently from single strategies. For instance, Hoffman-Kaminski have noted (\cite{Hoffman}) that while single strategies can have skewness ranging from around $[1.3,1.7]$ and kurtosis from $[8.8,15.3]$, portfolio skewness can be as low as $0.1$.

We first consider $N$ indepedent returns as an N-vector, $R_t\sim {\mathscr N}(0,\sigma^2 I)$, assumed to have the same variance. We devise signals $X_t \sim {\mathscr N}(0,\gamma^2 I)$. The inner-product $X_t\cdot R_t$ has a density $\psi$ whose moment generating function is given by \cite{Simons}:

\[
M_N(t)=(1-2t\sigma\gamma\rho-\sigma^{2}\gamma^{2}t^{2}(1-\rho^{2}))^{-N/2}.
\]
From this we can easily derive four moments:
\begin{eqnarray*}
\mu_{1},  =&N \sigma\gamma\rho\\
\mu_{2}  =&N\sigma^{2}\gamma^{2}((N+1)\rho^{2}+1)\\
\mu_{3}  = &N (N+2)\sigma^{3}\gamma^{3}\rho((N+1)\rho^{2}+3)\\
\mu_{4}  =  &\sigma^{4}\gamma^{4} \Bigl( (N+6) (N+4)(N+2) N \rho^4+3 (N+2)N (1-\rho^2)^2 + \\
& \quad  6(N+f4)(N+2)N\rho^2(1-\rho^2) \Bigr)\\
\end{eqnarray*}
This leads to centralized moments 
$$\sigma^2 = N(\rho^2+1)$$
and 
$$\mu_3^c =  2 N \rho(\rho^2+3) $$

From these we derive the Sharpe ratio:

$$\SR =  \frac{\sqrt{N} \rho} { \sqrt{\rho^2 +1}}$$

Maximizing the SR over $\rho$ leads to $\frac{\sqrt{N}\sqrt{2}}{2}$, clearly showing the benefit of diversification when measuring the Sharpe ratio.

The skewness is
$$ \gamma_3 = \frac{1}{\sqrt{N}} \frac{2  \rho(\rho^2+3)} { ( \rho^2+1)^{3/2}} $$
and if we consider maximal Sharpe, the corresponding skewness is
$$ \gamma_3^{\max} = {8N  \over (2N)^{3/2}} = \frac{2\sqrt{2}}{\sqrt{N}}$$
will show reductions on the order of $1/\sqrt{N}$ in the total number of (orthogonal) assets. This is as expected from large diverse portfolios. In the limit, simple application of central limit theory should give us asymptotic normality.  Effectively, introducing more purely orthogonal assets will increase Sharpe ratios, but decreases the (relatively desirable) positive skewness. 

If we have multiple possibly correlated assets and multiple, possibly correlated signals, we assert that an optimal strategy would be to perform {\em canonical correlation analysis} (CCA),
\footnote[7]{Canonical correlation  (from \cite{Hotelling}, see for example, \cite{Rencher}) is defined by first finding the linear vectors $w_1$ and $v_1$ withe $|w_1|=|v_1|=1$, such that $\rho(w_1\cdot R, v_1\cdot X)$ is maximized. The resulting correlation is the {\em canonical correlation}. The {\em canonical variates} are defined by finding subsequent unit-vectors  $w_k$ and $v_k$ such that $\rho(w_k\cdot R, w_j\cdot R)=\delta_{kj}$, $\rho(v_k\cdot X, v_j\cdot X)=\delta_{kj}$, and $\rho(w_k\cdot R,v_k\cdot X )$ is maximized, leading to $\rho(w_k\cdot R,v_j\cdot X )=r_k\delta_{kj}$ . The solution is via a generalized eigenvalue problem
\begin{eqnarray*}
\Sigma_{RR}^{-1}\Sigma_{RX}\Sigma_{XX}^{-1}\Sigma_{XR}w_k&=& r_k^2 w_k\\
\Sigma_{XX}^{-1}\Sigma_{XR}\Sigma_{RR}^{-1}\Sigma_{RX}v_k &=& r_k^2 v_k
\end{eqnarray*}
where $\Sigma$ is the partitioned correlation matrix of $(R,X)$ and the canonical correlates $w_k$ and $v_k$ are the eigenvectors with the same eigenvalues $r_k$. The corresponding portfolios of {\em canonical strategies}, $S_k^{CCA}\equiv (v_k\cdot X)(w_k\cdot R)$ each have returns and variances as characterised by equation (\ref{eqn:centered_moments1} and \ref{eqn:centered_moments2}) with corresponding correlations $r_k$ (i.e., with Sharpe ratios given by $\SR[S_k]=r_k/\sqrt{r_k^2+1}$) and, due to their independence, can easily be weighted to optimize the portfolio Sharpe Ratio. 
The method of weighting the cannonical strategies is of course, similar  to a risk-parity portfolio, due to the independence of asset returns.
We assert that this method gives  the maximal Sharpe ratio for the linear combination of signals and returns, although we leave this proof to a subsequent paper.} resulting in a set of decorrelated strategies (using a and combination of signals to weight a portfolio of assets). The resulting strategies are  decorrelated but with unequal returns and variances. Many results of this section would apply after scaling  the portfolio returns. The end-result could easily be optimized using simple mean-variance analysis (reweighting the returns on the independent strategies). We leave the details for another study. 

While our optimizer is unlikely to be in use among CTAs, it is still notable that widely diversified CTAs (irrespective of underlying asset correlations) appear to have decent Sharpe ratios but relatively lower positive skewness, much in line with the discussion of this section. Our simple results here about the final Sharpe ratio and skewness of course depend on independence of the underlying assets and of course the signals themselves, which must only be correlated with their respective asset returns. While this is a not an altogether natural setting, it is suggestive of the gains that can be made in introducing purely orthogonal sources of risk, or perhaps in orthogonalizing (or attempting to) asset returns prior to forming signals,  later recombining into a portfolio, and that this may lead to far more desirable properties of portfolios than finding strategies on multiple non-orthogonalized assets.

\section{Gaussian Returns vs Products of Gaussians Returns}
\label{Section:Gauss vs Product}

While we believe that the assumption of Gaussian returns (and Gaussian signal) is a simplification, we also believe this is far more realistic than the assumption of Gaussian returns for a dynamic strategy.  Throughout this paper we consider Gaussian (log) returns $R\sim\mathscr{N}(0,\sigma_R^2)$ and Gaussian signal $X\sim\mathscr{N}(0,\sigma_X^2)$ which together are jointly Gaussian, and together form components of the dynamic strategy $S_t= X_t R_t$, whose properties we study.

To be clear, our signal is not considered to have foresight and is fully known as of time $t$, while the return $R_t$ is from $t$ to $t+\delta t$. All expectations calculated are unconditional, or, can be thought of as conditioned on $t_0<t<t+\delta_t$. Consequently, each element, the signal and the return will be random variables.

Were we to consider expectations conditional on $t$, then the resulting strategy returns  $S_t$ would be trivially Gaussian. In the unconditional case, the resulting returns are far more interesting and relevant.

CTA returns are known to generally be positively skewed and highly kurtotic over the relevant horizons we are concerned with (i.e., daily, weekly, monthly), as has been noted by \cite{Potters-Bouchaud}, \cite{Hoffman} and others. If we measure far longer-horizon returns, asymptotic theory should show that  favourable qualities like skewness may disappear.

Consequently, even though we make many comparisons to results stemming from either asymptotic theory (e.g., \cite{Lo}) or using exact normality, this comparison does not, in fact, compare like-for-like. Clearly \cite{Lo} is appropriate for large-samples, as is possible under conditions when the central limit theorem (CLT) holds, e.g., with weak-dependence, summing returns over increasingly longer  horizons, or in the case of a large cross-sectional dimension with increasing numbers of decorrelated assets. For dynamic strategies, asymptotic normality should be expected for large numbers of decorrelated dynamic strategies as well as for long-horizon (e.g., annual or longer, non-overlapping) returns for single dynamic strategies.

Consequently, we believe our standard error results are more appropriate for hypothesis testing on statistics for dynamic strategies. We discuss a strategy for establishing product measures as large-sample limits in appendix \ref{Section:FullDist}, although asymptotics are beyond the scope of this current study.

\section{Conclusion}

Fully systematic dynamic strategies are used by a large portion of the asset management industry as well as by many non-institutional participants.  Meanwhile, they are only partly understood. 
Many funds and strategies (e.g., especially investment bank {\em smart-beta} or {\em styles-based} products) involve investment in strategies which are not optimised in any sense. Strategies which are paid via index-swaps have great limits in terms of their adaptability, leading to often highly suboptimal end-results.  While there have been some very significant results derived in the theoretical properties of these dynamic strategies, there is still much more work left to do. Given that most  academic literature in this area considers more general distributions, there has not been a firm foundation to build and extend these results. 

It is hoped that this paper does form a foundational approach to the study of dynamic strategies and {\em how} to optimize them. We make efforts to understand their properties without claiming to understand {\em why} they work (i.e., why there are stable ACFs in the first place). Given that most asset returns returns are known to have non-trivial autocorrelations, we can establish many results.
In particular, we have derived a number of results merely by applying well-known techniques to dynamic strategies, e.g.,:
\begin{itemize}
	\item Strategy returns can be shown to be positively skewed and leptokurtic.
	\item Sharpe ratios can be characterized, as can skewness and kurtosis.
  \item The standard errors for Sharpe, skewness and kurtosis can be derived.
  \item Strategies designed to optimise Sharpe ratios should be based on TLS rather than minimizing prediction error.
	\item Gains from adding orthogonal assets/risks can be quantified.
  
 \end{itemize}
Some of these items are empirically well-known, but others are genuinely new. Meanwhile, we have extended our results to the derivation of over-fitting penalties akin to Mallow's $C_p$ or AIC and can be used to do model selection and predict likely out-of-sample Sharpe ratios from in-sample fits (see \cite{Firoozye2}).

Our study is incomplete. We believe that there is a good deal of interesting work to be done in areas such as:
\begin{itemize}
  \item optimal linear strategies incorporating transaction costs.
	\item optimal linear strategies relaxing normality.
	\item normalized linear signals (e.g., z-scores) and optimal non-linear functions of z-scores.\footnote[8]{We note that normalized signals applied to normalized returns series can be represented as the product of two Student t-distributions, which is also relatively well-studied \cite{Multi-T, Joarder} and the results are qualitatively very similar to those which we have produced in this study. However, the more commonly used strategy of applying normalized signals to returns, with the resulting strategies then vol-scaled, cannot be derived as a trivial application of well-known results}
  \item non-linear strategies which are optimised to specific utility functions, possibly incorporating  smoothness constraints, especially when relaxing normality.
	\item local optimality when relaxing stationarity.
	\item good-deal bounds in the presence of auto-correlated assets with possible non-stationarity or structural breaks.
\end{itemize}

We note that our assumptions were never meant to be completely realistic: stationary returns with fixed ACF and Gaussian innovations can only work in theory, not in reality. Many quantitative traders design strategies to overcome the challenges of dealing with real-world data issues and the issues of over-fitting.
We nonetheless present them as a good starting point for further analysis, hoping to use this work as the basis for further exploration and to put the general study of dynamic strategies onto a more firm theoretical footing.

Some of our findings should be of note to practitioners.  In particular, the use of OLS and other forecast error minimizing methods is not necessarily optimal, depending on the problem at hand; total-least squares or other correlation-maximizing methods such as CCA may be more efficient. High Sharpe ratios and positive skewness are often quoted as rationales for entering into strategies and, strategies are changed with the rationale of increasing these measures. The relative significance of any of these changes depends on confidence intervals or standard errors, and we have derived these specifically suited for dynamic trading strategies. Kurtosis is not studied as often, but as we show, all dynamic strategies should be leptokurtic and this is an important attribute of these strategies. Other results, such as over-fitting penalties and optimal non-linear strategies, we save for later papers. With a more solid theoretical footing as a sort of {\em rule-of-thumb} for the development, optimisation, selection and alteration of dynamic strategies, we only hope that there can be room to improve strategy design.


\vspace{30pt}
\section*{Acknowledgements}
N.\ Firoozye would like to give his wholehearted love and appreciation to Fauziah, for hanging on, when the paper was always {\em almost done.} I am hoping the wait is finally over. 
Adriano Soares Koshiyama would like to to acknowledge the funding for its PhD studies provided by the Brazilian Research Council (CNPq) through the Science Without Borders program.

The authors would also like to thank Brian Healy and Marco Avellaneda for the many suggestions and encouragement. Finally, were it not for the  product design method as practised by Nomura's QIS team, the authors would never have been inspired to pursue a mathematical approach to this topic.

\textemdash \textemdash \textemdash \textemdash \textemdash \textendash{}
\tabularnewline

\appendix

\section{Full distributions for single period}
\label{Section:FullDist}

In general, for $X$ and $R$ having joint density $\psi^{X,R}(x,r)$, and
have $S_t= X_{t}R_{t}$ is known to have the product pdf, 

\begin{equation}
\label{eqn:product}
\psi^S(s)=\int_{-\infty}^\infty\psi^{X,R}\bigl(x,\frac{s}{x}\bigr)\frac{1}{|x|} dx
\end{equation}

and, in the special case where $X\sim\mathscr{N}(0,\sigma_X^2)$ and $R\sim\mathscr{N}(0,\sigma_R^2)$ jointly normal with correlation $\rho$ (i.e., $\psi$ being a bivariate gaussian), this results in the closed-form expression:

\begin{equation}
p_{s}=\frac{1}{\pi\sigma_R\sigma_X}\exp\bigl(\frac{\rho s}{\sigma_R\sigma_X(1-\rho^{2})}\bigr)K_{0}\bigl(\frac{|s|}{\sigma_R\sigma_X(1-\rho^{2})}\bigr)
\label{eqn:FullDist}
\end{equation}
where $K_0(\cdot)$ is a modified Bessel function of the $2^{nd}$ kind  (\cite{Simons}, p 51, eq 6.15). The more general density for non-zero means, is given in \cite{Exact} as an infinite series. In the special cases of independence and of correlated but zero mean, the expressions become much simpler and we choose to focus on the zero-mean case here. The density is unbounded at zero and has fat tails and positive skewness, becoming more pronounced with higher correlation. We can see the distribution for a variety of correlations in figure (\ref{fig:exact}), with the skewness becoming increasingly pronounced for higher $\rho$. In the limit as $\rho\rightarrow 1$ the distribution converges to that of the central $\chi^2$ distribution with one degree of freedom.

\begin{figure}[h!]
	\begin{center}
		\begin{minipage}{110mm}
			\includegraphics[width=\linewidth]{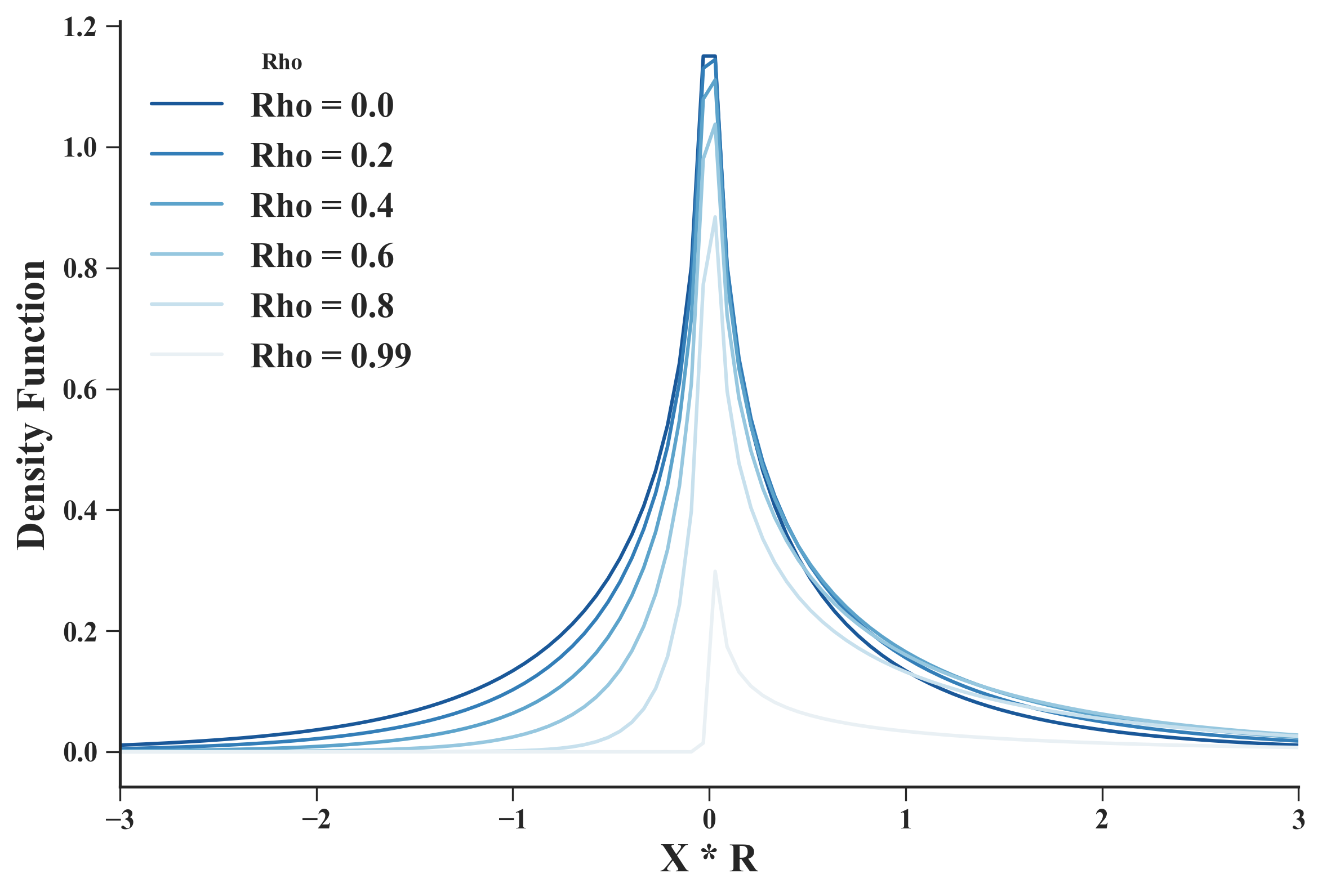}
			\caption{{\bf Complete product distributions}, for $\rho\in\{0,0.2,0.4,0.6.,0.8,1.0\}$, normalised to have unit variance, so they can be depicted on one plot. Note the singularity at 0, the increasing asymmetry, nearly truncated left-tails and marginally fatter right-tails with increasing $\rho$.} \label{fig:exact}
		\end{minipage}
	\end{center}
\end{figure}

In fact, $K_0(z) = O(e^{-z}/\sqrt{z})$ for $z\rightarrow \infty$ and we can see that the tail behaviour of the pdf in equation (\ref{eqn:FullDist})  changes quite significantly from when $\rho=0$  and $K_0(z)$ is the only term to consider, to when $\rho>0$, introducing an asymmetry. The Bessel function is unbounded at $z=0$. Asymptotically, we have the following behaviour:

\begin{eqnarray*} 
p_s =& O(e^{-|s|}/\sqrt{|s|}) \quad&\mbox{for~} \rho=0, |s|\rightarrow \infty\\
p_s =& O( e^{-s}/\sqrt{|s|}) \quad&\mbox{for~} \rho>0, s\rightarrow \infty\\
p_s =& O( e^{s}/\sqrt{|s|}) \quad&\mbox{for~} \rho>0, s\rightarrow -\infty\\
p_s =& O(-\log{|s|}) \quad&\mbox{for~}  |s|\rightarrow 0 \\
\end{eqnarray*}

\section{Convolution Filters as Jointly Gaussian}

\label{Section:ConvolutionFilters}

If we have a purely-random mean-zero covariance-stationary discrete-time Gaussian process $R_t$, we note by Wold Decomposition, that all stationary Gaussian processes can be represented as MA$(\infty)$ in terms of Gaussian innovation process and coefficients in $\mathit{l}^2$, with no deterministic component, i.e., 
$$ R_t = \sum_{k=0}^\infty \phi(k)\epsilon_{t-k}$$
for $\epsilon\sim\mathcal{N}(0,\sigma^2)$, $\sum_0^\infty\phi_k^2<\infty$ and $\phi(0)=1$.

More specifically, we have
\begin{eqnarray*}
E[R_t] &=&0\\
Var(R_t)&=&\sigma_R^2\\
corr(R_t,R_s) &=& \gamma(t-s)
\end{eqnarray*}
(i.e., with ACF $\gamma$), and this would be sufficient to determine $\phi$ if we so wished.

We are interested in constructing signals: $X_t$. A standard signal we will consider is  a convolution signal, i.e.,

$$X_t = \sum_{k\geq 1} \phi(k) R_{t-k}$$

All the signals mentioned in the introduction (e.g., moving average or difference of moving averages or  ARMA based forecasts), can be expressed as convolutions with historic returns. A convolution filter is an example of a time-invariant linear filter. It the coefficients $\phi \in \mathit{l}^2$ then it is well known that the resulting filtered series $X_t$ are Gaussian\footnote[9]{see e.g., Gallagher, R, {\em Stochastic Processes: Theory for Applications}, 2014, (Cambridge UP: Cambridge), or Gallagher R,   Principles of Digital Communications. {\em MIT Open Coursework}. Section 7.4.2, Theorem 7.4.1.}. The filtered series $X_t$ is also jointly Gaussian with $R_t$.

\begin{eqnarray*}
E[X_t] &=& 0\\
Var(X_t) &=& \sum_{k,j\geq 1} \phi(k) \phi(j)E[R_{t-k} R_{t-j}]\\
                &=&\sum_{k,j\geq 1} \phi(k)\phi(j)\gamma(k-j)\sigma_R^2
\end{eqnarray*}
(dropping all first order terms because $E[R_t]=0$ )
and,
\begin{eqnarray*}
corr(R_t,X_t) &=& \frac{E[\sum_{k\geq 1} \phi(k) R_{t-k}, R_t]}  {std(X) \sigma_R}\\
                      &=& \frac{\sum_{k\geq 1} \phi(k)\gamma(k)}{ ( \sum_{k,j\geq 1} \phi(k)\phi(j)\gamma(k-j) )^{1/2}}
\end{eqnarray*}
cancelling out all $\sigma_R$ terms.

Consequently,
$$sgn(corr(R_t,X_t) ) = sgn( \gamma\cdot \phi))$$
(i.e., the sign of this infinite inner product matters most for determining usefulness of a given convolution design).

Of the signals mentioned in the introduction, EWMA and SMA in returns, differences of EWMAs and SMAs in returns, and forecasts from ARMA models are all examples of convolution filters with $\mathit{l}^2$ coefficients. Most signals constructed in levels (e.g., the difference between a price and its simple moving average), are not, in general, Gaussian, although a difference between a price and one or more EWMAs may be Gaussian depending on the data-generating process for the price series (i.e., for MA processes). 

Of course, a linear time-invariant filter with $\mathit{l}^2$ coefficient is just one example of a signal $X_t$ which is jointly Gaussian with returns $R_t$. Similarly, if $Z_t$ is a set of Gaussian (exogenous) features, then $X_t  = Z_t \beta$  will also be Gaussian and we will assume the $Z_t$ are jointly Gaussian with $R_t$, meaning also $X_t$ and $R_t$ will be jointly Gaussian.

\section{Limiting behaviour for convolution of stationary returns}

We assert some asymptotic approximation results for dynamic strategies, only outlining  their proof. Our claim is that this  justifies the use and analysis of product of Gaussian distributions in stationary (or locally stationary) distributions. The proof itself is the direct consequence of much more general work on the limits of quadratic forms by G\"otze and Tikhonov and by the Wold decomposition theorem.

Letting  $\eta$ be iid random variables with mean zero and unit variance, and letting $\epsilon$ be iid normal random variables with zero mean and unit variance, we form the quadratic forms:
$$Q_n=\sum_{j,k=1}^n a_{jk}^n \eta_j\eta_k \;\hbox{and}\; G_n=\sum_{j,k=1}^n a_{jk}^n \epsilon_j\epsilon_k.$$ We write the  metric
$$\delta_n(Q_n,G_n)=\sup_x|P\{Q_n\leq x\} - P\{G_n\leq x\}|.$$ We simplify the statement of Theorem 1 from \cite{Goetze1}:

\begin{theorem}[Goetze-Tikhomirov] Let $\eta$ be IID with
 $$E\eta=0,\; E \eta^2=1, \;E|\eta|^3=\beta_3<\infty.$$ Then there is a constant $C$ such that 
 $$\delta_n(Q_n,G_n)\leq C\beta_3^2 \Gamma_n$$
 where $\Gamma_n=\max_{1\leq j\leq n} \sum_{k=1}^n |a_{jk}^n|$.
 \end{theorem}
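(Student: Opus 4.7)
The plan is to prove this quantitative CLT for quadratic forms by a Lindeberg-style replacement argument combined with Esseen's smoothing inequality, which is the standard route when the result cannot be reduced to an i.i.d.\ sum with a Berry-Esseen bound. I would first assume $A^n = (a_{jk}^n)$ is symmetric (otherwise replace it by $\frac{1}{2}(A^n + (A^n)^\top)$, which leaves $Q_n$ unchanged), and then set up a telescoping chain of hybrids $W^{(0)} = Q_n, W^{(1)}, \ldots, W^{(n)} = G_n$, where $W^{(i)}$ replaces $\eta_1,\ldots,\eta_i$ by Gaussians $\epsilon_1,\ldots,\epsilon_i$. The triangle inequality gives
$$\delta_n(Q_n,G_n) \leq \sum_{i=1}^{n} \sup_{x}\bigl|P(W^{(i-1)}\leq x)-P(W^{(i)}\leq x)\bigr|.$$

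At each step I would condition on the $n-1$ untouched variables. The hybrid $W^{(i-1)}$ depends on $\eta_i$ only through $a_{ii}\eta_i^2 + 2\eta_i L_i$, where $L_i = \sum_{k\neq i} a_{ik}\xi_k$ is measurable with respect to the remaining variables, and similarly $W^{(i)}$ depends on $\epsilon_i$ through $a_{ii}\epsilon_i^2 + 2\epsilon_i L_i$. Since $\eta_i$ and $\epsilon_i$ share their first two moments, a Taylor expansion of the conditional characteristic function to third order leaves a remainder proportional to $\beta_3$ together with moments of $L_i$ and $|a_{ii}|$. Esseen's smoothing inequality
$$\delta(U,V) \leq \frac{1}{\pi}\int_{-T}^{T} \frac{|\phi_U(t)-\phi_V(t)|}{|t|}\,dt + \frac{C}{T}$$
then converts this into a Kolmogorov-distance bound; I would optimize $T$ using an anti-concentration (small-ball) estimate for the Gaussian quadratic form $G_n$, bound $E|L_i|$ and $|a_{ii}|$ by the $i$-th row $\ell^1$-sum of $A^n$, and sum over $i$ to consolidate into $\Gamma_n$.

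The main obstacle I anticipate is recovering the $\beta_3^2$ factor while keeping only $\Gamma_n$ on the right-hand side (no extra cumulative factor of $n$). The quadratic structure means that swapping a single $\eta_i$ disturbs both the diagonal piece $a_{ii}\eta_i^2$ and every cross-term $a_{ik}\eta_i\eta_k$, so third-moment errors enter twice in the conditional characteristic-function expansion: once through the pure-square contribution in $\eta_i^2$ and once through the bilinear cross-term via the conditional second moment of $L_i$ played against $\eta_i$. Carefully tracking how these two third-moment discrepancies multiply, together with the anti-concentration estimate for $G_n$ that sharpens the smoothing step, is what both produces the $\beta_3^2$ power and absorbs what would otherwise be a naive $n$-fold accumulation from the Lindeberg sum into the row-sum maximum $\Gamma_n = \max_j\sum_k|a_{jk}^n|$.
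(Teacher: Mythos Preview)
The paper does not prove this theorem. It is quoted (in simplified form) from G\"otze and Tikhomirov, \emph{Asymptotic Distribution of Quadratic Forms}, Ann.\ Probab.\ (1999), and used as a black box to justify the subsequent ``Products of Gaussians'' approximation. There is therefore no proof in the paper to compare your proposal against.

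That said, your outline is in the right spirit: the original G\"otze--Tikhomirov argument is indeed a Lindeberg-type replacement combined with characteristic-function smoothing and a concentration-function estimate for the Gaussian quadratic form. Where your sketch is thin is precisely where the real work lies. First, the anti-concentration bound for $G_n$ (the small-ball probability of a Gaussian quadratic form) is not a one-liner; G\"otze and Tikhomirov devote substantial effort to controlling the concentration function in terms of the spectral data of $A^n$, and without a sharp version of this you will not be able to choose the smoothing parameter $T$ so that the $1/T$ term is absorbed into $\Gamma_n$. Second, your heuristic for why the exponent on $\beta_3$ is $2$ rather than $1$ is suggestive but not an argument: the square arises from a specific interplay between the third-moment remainder in the Lindeberg step and the optimization of $T$ against the concentration-function bound, not simply from ``two third-moment discrepancies multiplying''. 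Third, summing the per-coordinate errors naively gives $\sum_j \sum_k |a_{jk}|$, not $\max_j \sum_k |a_{jk}|$; collapsing this to $\Gamma_n$ requires the concentration-function input again, and you have not indicated the mechanism. If you intend to reconstruct the proof, you should consult the original paper for these three ingredients rather than hope they fall out of a routine Taylor expansion.
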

Our assertion is a simple application of the results in    \cite{Goetze1}, (see \cite{Goetze2} and \cite{Goetze3} for further results) which applies to limiting theorems of quadratic forms of random variables.  

\begin{theorem}[Products of Gaussians] Let $R_t$ be a covariance stationary process with bounded 3rd moments and mean zero and its Wold decomposition given by $R_t=\sum_{s=1}^\infty b(s)\eta(t-s)$ with $\eta$ a white-noise process. Let the signal $X_t$ be a convolution of the lagged returns $R_t$ with an $\mathcal{L}^2$ convolution kernel, $\phi$ and $X_t=\sum_1^\infty \phi(s)R_{t-s}$.
We let $R_t^N=  \sum_0^N b(s) \eta(t-s)$ and $X_t^N = \sum_1^N \phi(s)R_{t-s}^N$ be truncated  sums (only involving the first $N$ terms),
$$S_t^{N}=\frac{1}{\sqrt{N}}X_t^N\cdot \frac{1}{\sqrt{N}} R_t^N$$
be the scaled truncated strategy returns.

Then there is a pair of Gaussians $\tilde{R}^{N}_t$ and $\tilde{X}^{N}_t$ ( $\tilde{S}_t^N=\tilde{X}^N_t\cdot \tilde{R}^N_t$ be the Gaussian strategy returns)such that $$\delta_n(S^N_t,\tilde{S}^N_t)\rightarrow 0,$$
or, in other words, that the product of Gaussian approximation can be arbitrarily close to the original strategy.

\end{theorem}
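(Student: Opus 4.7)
The plan is to identify $S_t^N$ as a quadratic form in the i.i.d.\ innovations $\eta$ and to apply the Götze-Tikhomirov comparison theorem verbatim. First I would substitute the Wold decomposition into the definition: $R_t^N = \sum_{u=0}^{N} b(u)\eta(t-u)$, and then
\begin{equation*}
X_t^N \;=\; \sum_{s=1}^{N}\phi(s) R_{t-s}^N \;=\; \sum_{m=1}^{2N} c_m^N \eta(t-m), \qquad c_m^N \;=\; \sum_{\substack{s+u=m \\ 1\le s\le N,\ 0\le u\le N}} \phi(s) b(u),
\end{equation*}
so only $n=O(N)$ innovations appear. Multiplying and applying the $1/N$ prefactor gives
\begin{equation*}
S_t^N \;=\; \frac{1}{N} X_t^N R_t^N \;=\; \sum_{j,k=1}^{n} a_{jk}^N\, \eta_j \eta_k, \qquad a_{jk}^N \;=\; \frac{1}{N}\, c_{t-j}^N\, b(t-k),
\end{equation*}
which, after symmetrization, is precisely the object $Q_n$ in the cited theorem.

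Next I would build the Gaussian companion in parallel. Let $\{\epsilon_j\}$ be i.i.d.\ $\mathcal{N}(0,1)$ on the same index set, and define $\tilde R_t^N = \sum_u b(u) \epsilon(t-u)$ and $\tilde X_t^N = \sum_m c_m^N \epsilon(t-m)$, so that $\tilde S_t^N = \tfrac{1}{N}\tilde X_t^N \tilde R_t^N = \sum_{j,k} a_{jk}^N \epsilon_j \epsilon_k$ is the Gaussian form $G_n$ with the \emph{same} coefficients $a_{jk}^N$. Joint Gaussianity of $(\tilde R_t^N, \tilde X_t^N)$ is automatic because both are linear functionals of the common i.i.d.\ Gaussian family. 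The Götze-Tikhomirov theorem then yields
\begin{equation*}
\delta_n(S_t^N,\ \tilde S_t^N) \;\leq\; C\beta_3^2\,\Gamma_n, \qquad \Gamma_n \;=\; \max_j \sum_k |a_{jk}^N|,
\end{equation*}
so the entire argument reduces to checking $\Gamma_n \to 0$. I would bound this via two applications of Cauchy-Schwarz: $|c_m^N| \leq \|\phi\|_2\,\|b\|_2$ uniformly in $m$ and $N$ (from the convolution), and $\sum_{k=1}^{n} |b(t-k)| \leq \sqrt{n}\,\|b\|_2$. Hence $\Gamma_n \leq \|\phi\|_2\,\|b\|_2^{\,2}\,\sqrt{n}/N = O(1/\sqrt{N}) \to 0$, which completes the proof.

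The main obstacle is precisely this bound on $\Gamma_n$: because the Wold coefficients are assumed only square-summable, one cannot naively invoke $\|b\|_1$, and the $\sqrt{n}\,\|b\|_2$ estimate combined with the $1/N$ scaling built into $S_t^N$ is what makes the Götze-Tikhomirov control decay. A minor bookkeeping nuisance is the symmetrization $a_{jk}^N \mapsto (a_{jk}^N + a_{kj}^N)/2$ required to match the statement of the theorem, and carefully counting so that $n$ is genuinely proportional to $N$. Under a stronger hypothesis such as $b \in \ell^1$ the rate would sharpen to $O(1/N)$, and the same quadratic-form argument with slowly-varying coefficient matrices should extend the approximation into the locally stationary regime alluded to in the discussion preceding the statement.
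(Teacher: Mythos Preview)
Your proposal follows essentially the same route as the paper: write $S_t^N$ as a quadratic form $\langle A^N\eta^N,\eta^N\rangle$ in the Wold innovations, build the Gaussian analogue with the identical coefficient matrix, and invoke G\"otze--Tikhomirov so that everything reduces to showing $\Gamma_n\to 0$. The paper's own argument stops at the assertion that ``with sufficient conditions on the Wold coefficients $b$ and on the convolution coefficients $\phi$, $\Gamma_N$ can be shown to decay to zero''; you go further and actually supply the Cauchy--Schwarz bound $\Gamma_n\le \|\phi\|_2\|b\|_2^2\sqrt{n}/N=O(N^{-1/2})$, which is more than the paper provides. Your remarks on symmetrization and on the sharper $O(1/N)$ rate under $b\in\ell^1$ are consistent with the paper's discussion of ``sufficiently well-behaved'' coefficients.
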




We note that the product $S_t=X_tR_t$ is given by the quadratic form: 
$$S_t=X_t R_t = <A\eta,\eta>$$

where $A$ is the operator given by $$A(u,v)=\sum_{s\leq t-1} \phi(t-s) b(s-u) b(t-v)
$$
for $u,v\leq t$. 
\begin{eqnarray*}
S_t^{N}&=&\frac{1}{\sqrt{N}}X_t^N\cdot \frac{1}{\sqrt{N}} R_t^N\\
&=&<A^{N} \eta^N,\eta^N>
\end{eqnarray*}
where $A^n$ is an $n\times n$ matrix
 $$A^n_{u,v} =\frac{1}{N}\sum_{s\in[t-n,t-1]} \phi(t-s) b(s-u) b(t-v),$$ for $u,v$ ranging in $[t-n,t]$ and $\eta^N=\{\eta_s\}_{s\in[t,t-N]}$.


 We note that $A^n$ is lower triangular with  no diagonal terms (elements on the diagonal correspond to instantaneously available knowledge, contemporaneous with the observed returns themselves and elements in the upper triangle of the matrix correspond to direct foresight). Moreover, with sufficient conditions on the original series $R_t$ (i.e., on the Wold coefficients $b$) and on the convolution coefficients $\phi$, the $\Gamma_N=\max_{1\leq j\leq n} \sum_{k=1}^n |A_{jk}^n|$ can be shown to decay to zero.

A direct application of the theory of quadratic forms would apply when the convolution coefficients are sufficiently well-behaved at infinity.

This is only one of the possible approaches to an asymptotic theory justifying the use of products of Gaussians.\footnote[10]{Other approaches include assuming infinitessimal Gaussian increments which are observed and ``stored'' and used in a convolution, then applied as a weight on a strategy which itself is held for a longer time. This effecftively results in some product of averages of returns and, obviously, when appropriately scaled can be shown to have a limit of a product of Gaussians.} While asymptotic approaches are not the main point of this paper, it should be clear that products of Gaussians help to approximate the behaviour of a wide array of dynamic strategies.

\section{ Nonzero means: Sharpe ratios and Skewness}

By an abuse of notation, we define $\SR[R]$ to be $\mu_R/\sigma_R$ and by an abuse of notation, we define $\SR[X]=\mu_X/\sigma_X$ (for $X$ the signal),

{\bf Corollary 1:} If $R\sim \mathscr{N}(\mu_R,\sigma_R^{2})$ and $X\sim \mathscr{N}(\mu_X,\sigma_X^{2})$ then
$$ \SR[X\cdot R] = {\SR[R]\cdot \SR[X] + \rho \over (\SR[R]^2+\SR[X]^2+2\rho \SR[R]\cdot \SR[S]+\rho^2+1)^{1/2}}
$$

{\bf Corollary 2:} If $R\sim \mathscr{N}(\mu_R,\sigma_R^{2})$ and $X\sim \mathscr{N}(\mu_X,\sigma_X^{2})$  then
$$ \gamma_3[X\cdot R] = { 2\rho (\rho^2 + 3 + 3 \SR[R]^2+3 \SR[X]^2)\over
(\SR[R]^2+\SR[X]^2+2\rho \SR[R]\cdot \SR[X]+\rho^2+1)^{3/2}}
$$

We note  the one period Sharpe ratio of the   strategy may depend on both the interaction between the Sharpe ratios of the Signals (weights) and the Returns, in particular whether they have the same sign or not, together with the sign of the correlation. In fact, the amplitude of the resulting strategy SR may be more dependent on the respective Sharpe ratios rather than $\rho$ since after all, $-1\leq\rho\leq 1,$while $\SR[R]$ and $\SR[X]$ may individually be above $1$.

\section{Transaction Costs}
The sections above consider optimal linear strategies with no transaction costs. If we include transaction costs then the formulas are not nearly as elegant, but the results may still remain tractable.

Maximizing Sharpe ratios are often the result of maximizing a quadratic utility of returns, e.g., 
\begin{equation}
U[XR]=E[XR] - \gamma \var[XR] \label{quadutil}
\end{equation}
where $\gamma$ is a measure of risk-aversion, sometimes called a {\em Kelly constant}. Extremals of the utility in equation (\ref{quadutil}) are known to coincide with maximal Sharpe ratios.

We only look at  convolution filter strategies, i.e., $\phi=(0,\phi_1,\phi_2,\ldots,\phi_K)$ which give a corresponding signal as $X_t=\phi * R_t =\sum_1^K \phi_k R_{t-k}$. As we mentioned above, fitting $\phi$ via TLS instead of OLS is most appropriate in the case of no-transaction costs.

If we include transaction costs proportional to a constant $\nu$, rather than to maximize a quadratic utility in (\ref{quadutil}), we can add  the extra term\footnote[11]{Alternatively,  a term such as $E[|\Delta X|\cdot P]$ where $P=P_0+\sum R_t$ could be added. Again, with work we could equally well characterize this expectation, using properties of distributions derived from Gaussians and some application of Isserlis' theorem}, e.g.,
$$U[XR]=E[XR] - \gamma \var[XR] -\nu E[|\Delta X|]$$
Given that $$\Delta X = \phi_1 R_{t-1} + \sum_{k=1}^K \Delta \phi_k R_{t-k} - \phi_K R_{t-K}\equiv\Delta\phi *R $$  where $\Delta \phi = (0,\phi_1,\Delta\phi_1,\Delta\phi_2,\ldots,\Delta\phi_K,-\phi_{K})$.
The r.v. is normal, $\Delta X\sim \mathcal{N}(0,\sigma_{\Delta X}^2)$ and, using the properties of {\em folded Gaussian variables}, we can characterise
$$E[|\Delta X|]=\sqrt{\frac{2}{\pi}}\sigma_{\Delta X}$$
The entire utility then can be written as
$$U[XR]= \rho\sigma_X\sigma_R - \gamma \sigma_X^2\sigma_R^2 (1+\rho^2) - \nu\sqrt{\frac{2}{\pi}}\sigma_{\Delta X}$$

Optimising this utility will be very much like a standard least-squares problem except the term $\sigma_{\Delta X}$ is a form of regularization.

In fact if we let $C$ being the ACF (Toeplitz) matrix of $(R_t,\ldots R_{t-k})$, i.e.,
$$C=
\begin{bmatrix}
    1 & c(1) & c(2) & \dots  & c(k-1) \\
    c(1) & 1 & c(1) & \dots  & c(k-2) \\
    \vdots & \vdots & \vdots & \ddots & \vdots \\
    c(k-1) & c(k-2) & c(k-3) & \dots  & 1
\end{bmatrix} $$ 
and $X=\phi*R$ with $\phi=(\phi_1,\phi_2,\ldots,\phi_k)$ and let $\mathbbm{1}_0=(1,0,\ldots,0)$
then $\sigma_X=\sigma_R\sqrt{\phi'\cdot C\cdot\phi}$, $\rho=\phi'\cdot C\cdot \mathbbm{1}_0 $ and 
$\sigma_{\Delta X}
=\sigma_R\sqrt{(\Delta\phi)' \cdot C\cdot (\Delta\phi)}$, effectively penalizing changes in the $\phi_k$.

The resulting optimisation problem thus becomes
$$U[XR] = \sigma_R^2 \sqrt{\phi'\cdot C\cdot \phi} \phi'\cdot C\cdot\mathbbm{1}_0-\gamma\sigma_R^4 (\phi'\cdot C\cdot \phi)(1+(\phi'\cdot\mathbbm{1}_0)^2)-\nu\sqrt{2\over\pi}\sigma_R\sqrt{(\delta\phi)'\cdot C\cdot (\Delta\phi)}$$

This final regularization term should ensure that the filter weights $\phi_k$ do not vary too much between themselves (i.e., it is a sort of  smoothness constraint analogous to those in  a {\em Lasso} or {\em Ridge-regression}, but with a slightly different functional form).  Unlike the case of an $\mathscr{L}^2$ penalty as in Ridge regression or an $\mathscr{L}^1$ penalty as in Lasso, this term though is neither linear nor quadratic.

We do not consider properties of the solutions of optimal trading strategies with transaction costs in this paper.

\section{Multiperiod Returns}

Given the ease of analysis of Gaussian returns, it is straightforward to calculate moments of the strategy returns to any horizon. While we do not explore further implications, we produce relevant formulas in this section for future elaboration.

For long-horizon trades we note the following (\cite{Magnus})

{\bf Theorem}[Magnus] Let $A$ be a symmetric matrix and $R\sim \mathscr{N}(0,V)$ with $V$ positive definite. Define $p=R'A R$. then the expectation, variance, skewness and kurtosis of $p$ are:
\begin{eqnarray*}
\mu &= &\tr(AV)\\
\sigma^2 &=& 2 \tr(AV)^2\\
\gamma_3 &=& 2\sqrt{2} \frac{\tr(AV)^3}{(\tr (AV)^2)^{3/2}}\\
\gamma_4 &=& 12 \frac{\tr(AV)^4}{(\tr(AV)^2)^{2}}
\end{eqnarray*}

which would allow us to calculate Sharpe ratios, skewness and kurtosis to any horizon. Continuous analogues are feasible using functional central limit theory for Wick products (see  \cite{FCLTWick}).

Given this and the various moment conditions for our Gaussian returns:
\begin{eqnarray*}
E[R_t]=&0\\
E[R_t^2]=&\sigma^2\\
E[R_t^3]=&0\\
E[R_t R_s]=&C(t-s)\sigma^2
\end{eqnarray*}

where $C(0)=1$ and $C(-k)=C(k)$, we can combine for characterising strategy moments.

If the ACF matrix $\tilde{C}$ is known with certainty, of course, then the linear filter which maximizes the correlation of signal to returns is merely given by finding the eigenvector corresponding to the smallest eigenvalue, i.e., 
$$v_{min}=argmin_v {v'\tilde{C} v\over |v|^2}$$ and normalizing the first coefficient to be one, i.e., $a_k=-v(k+1)/v(1)$.

For longer horizons, w use the formulas given by Magnus, or equally compute the term-structure by hand:
$$\mu_1(T) = E\sum X_t R_t = \sigma_X\sigma_R \rho T$$
and 
\begin{eqnarray*}
E(\sum X_t R_t)^2 &=& \sum\sum E[X_t R_t X_s R_s]\\
&=&\sum\sum\bigl(E[X_tR_t]E[X_sR_s]+E[X_tX_s]E[R_tR_s]+E[X_tR_s]E[X_sR_t])\\
&=& \sigma_X^2\sigma_R^2\bigl(T^2\rho^2+\sum\sum (C(i-j)D(i-j)+\rho(i-j)\rho(j-i))\bigr)
\end{eqnarray*}
where $C(k)$ is the ACF for $R$ and $D(k)$ is the ACF for signal $X$, and $\rho(k)=E[X_tR_{t-k}]$ and 
$\rho(k)=\rho(-k)$ and $\rho(0)=\rho$ is the contemporaneous correlation.

Consequently, 
$$ var(\sum X_t R_t) = \sigma_X^2\sigma_R^2\bigl(2\sum_{k=1}^{T-1} (T-k) (C(k)D(k)+\rho(k)\rho(-k)) + T(1+\rho^2)\bigr)$$

and consequently, the Sharpe ratio to any horizon is given by
$$SR(T) = \frac{\rho T^{1/2}}{1+\rho^2+2\sum_1^{T-1}\frac{T-k}{T}(C(k)D(k)+\rho(k)\rho(-k))}$$
giving us the term-structure of Sharpe ratios by horizon.

\section{Set-up details} \label{set-up}

If we have a purely-random mean-zero covariance-stationary discrete-time Gaussian process $R_t$, we note by Wold Decomposition, that all stationary Gaussian processes can be represented as MA$(\infty)$ in terms of Gaussian innovation process and coefficients in $\mathit{l}^2$, with no deterministic component. 

Specifically, we have
\begin{eqnarray*}
	E[R_t] &=&0\\
	Var(R_t)&=&\sigma_R^2\\
	corr(R_t,R_s) &=& \gamma(t-s)
\end{eqnarray*}
(i.e., with ACF $\gamma$)

Then we are interested in constructing signals: $X_t$. A standard signal we will consider is  a convolution signal, i.e.,

$$X_t = \sum_{k\geq 1} \phi(k) R_{t-k}$$
This is an example of a time-invariant linear filter. It the coefficients $\phi \in \mathit{l}^2$ then it is well known that the resulting filtered series $X_t$ are Gaussian\footnote[1]{see e.g., Gallagher, R, Stochastic Processes: Theory for Applications, Cambridge UP, 2014, or Gallagher R,  MIT Open Coursework, Principles of Digital Communications, Section 7.4.2, Theorem 7.4.1.}. The filtered series $X_t$ is also jointly Gaussian with $R_t$.

We note that if the $\phi(k)$ can be derived as the coefficients of an ARMA model forecast, or they can be from a simple EWMA, as we have mentioned in the paper.

\begin{eqnarray*}
	E[X_t] &=& 0\\
	Var(X_t) &=& \sum_{k,j\geq 1} \phi(k) \phi(j)E[R_{t-k} R_{t-j}]\\
	&=&\sum_{k,j\geq 1} \phi(k)\phi(j)\gamma(k-j)\sigma_R^2
\end{eqnarray*}
(dropping all first order terms because $E[R_t]=0$ )
and,
\begin{eqnarray*}
	corr(R_t,X_t) &=& \frac{E[\sum \phi(k) R_{t-k}, R_t]}  {std(X) \sigma_R}\\
	&=& \frac{\sum \phi(k)\gamma(k)}{ ( \sum_{k,j\geq 1} \phi(k)\phi(j)\gamma(k-j) )^{1/2}}
\end{eqnarray*}
cancelling out all $\sigma_R$ terms.

Consequently,
$$sgn(corr(R_t,X_t) ) = sgn( \gamma\cdot \phi))$$
(i.e., the sign of this infinite inner product matters most for determining usefulness of a given convolution design). A linear time-invariant filter with $\mathit{l}^2$ coefficient is just one example of a signal $X_t$ which is jointly Gaussian with returns $R_t$. Similarly, if $Z_t$ is a set of Gaussian (exogenous) features, then $X_t  = Z_t \beta$  will also be Gaussian and we will assume the $Z_t$ are jointly Gaussian with $R_t$, meaning also $X_t$ and $R_t$ will be jointly Gaussian.

\bibliographystyle{jasaauthyear}
\bibliography{xampl}

\begin{thebibliography}{99}


\bibitem[Acar, 1990]{Acar} Acar, E. Expected returns of directional forecasters. in {\em Advanced Trading Strategies}, ed. E Acar and S Satchell,  1990, (Butterworth-Heinemann: Oxford), 122-151.

\bibitem[Allenbridge IS, 2014]{Allenbridge} Allendbridge IS, Quantitative Investment Strategy Survey, 2014.

\bibitem[Asness-Moskowitz-Pedersen, 2013]{ValueMomentum} Asness, C.S.,  Moskowitz, T.J. and  Pedersen, L.H. Value and momentum everywhere. {\em The J. Finance}, 2013 {\bf 68}(3), 929-985.

\bibitem[Credit Suisse, 2017]{HFT} Avramovic, A.,  We're All High Frequency Traders Now. {\em Credit Suisse Market Structure, Trading Strategy.} 15 March 2017,\url{
https://edge.credit-suisse.com/edge/Public/Bulletin/Servefile.aspx?FileID=28410&m=-1290757752} (Accessed 21 Aug 2017)


\bibitem[Cram\'er, 1946]{Cramer}
Cram\'er, H.  {\em Mathematical Methods of Statistics.} 1946  (Princeton University Press: Princeton).


\bibitem[Babu and Feigelson, 1992]{Babu}  
Babu, G.J., and Feigelson, E.D. Analytical and Monte Carlo comparisons of six different linear least squares fits. {\em Communications in Statistics-Simulation and Computation} {\bf 21}(2), 1992, 533-549.


\bibitem[Baltas and Kosowski, 2013]{Baltas} Baltas, N. and Kosowski, R. Momentum Strategies in Futures Markets and Trend-following Funds (January 5, 2013).  Presented at Finance Meeting EUROFIDAI-AFFI Paper, Paris. December 2012. Available at SSRN: \url{https://ssrn.com/abstract=1968996} (Accessed 21 August 2017).


\bibitem[Barclay Hedge, 2017]{BarclayHedge} Barclay Hedge: CTA's Asset Under Management. Available online at   
\url{https://www.barclayhedge.com/research/indices/cta/Money_Under_Management.html}, (accessed 23 Sep 2017). 

\bibitem[Bouchaud and Potters, 2009]{PBRandMat} Bouchaud, J.P., and Potters, M. Financial applications of random matrix theory: a short review. Available online at arXiv \url{https://arxiv.org/abs/0910.1205}
(2009), (accessed 2 Jan 2016).


\bibitem[Bouchaud et al, 2016]{Bouchaud}  Bouchaud, J.P., Dau, T.L., Deremble, C., Lemp\'ri\`ere, Y., Nguyen, T.T., Potters M. Tail Protection for Long Investors: Convexity at Work. {\em J Inv Strat}, Dec 2017, {\bf 7}(1), 61-84. 

\bibitem[Bruder et al, 2011]{BruderTrend} Bruder, B. and Dao, T.L., Richard, J.C., and Roncalli, T., Trend Filtering Methods for Momentum Strategies. December 1, 2011. Available online at SSRN: \url{https://ssrn.com/abstract=2289097}, (accessed 2 July 2014). 

\bibitem[Bruder and Gaussel, 2011]{Bruder} Bruder, B and Gaussel, N. Risk-return analysis of Dynamic investment strategies. Lyxor White paper, Issue 7 (Jun 2011), available online at SSRN: \url{https://papers.ssrn.com/sol3/papers.cfm?abstract_id=2465623} (accessed 2 July 2014).


\bibitem[Cui et al., 2016]{Exact} Cui, G, Yu, X, Iommelli, S, and Kong, L. Exact distribution for the product of two correlated Gaussian random variables. {\sl IEEE Sig Proc Letters}, Nov 2016, {\bf 23} (11), 1662-1666.





\bibitem[Fama and French, 1992]{FamaFrench} Fama, E.F., and  French, K.R., The Cross-Section of Expected Stock Returns,  {\em J Finance}, Jun 1992, {\bf 47}(2), 427-465.




\bibitem[Fuller, 2009]{Fuller} Fuller, Wayne A. {\em Measurement error models}, Vol. 305, 2009,(John Wiley \& Sons: New York).

\bibitem[Fung and Hsieh, 1997]{Fung-Hsieh} Fung and Hsieh, Empirical Characteristics of Dynamic Trading Strategies: The case of Hedge Funds, {\em Rev Fin Stud}, 1997, {\bf 10}(2), 275-302.




\bibitem[Gigerenzer and Todd, 1999]{Gigerenzer} Gigerenzer, G., Todd, P.M., and the ABC Research Group. {\em Simple heuristics that make us smart} 1999, (Oxford University Press: Oxford).


\bibitem[G\"otze and Tikhomorov, 1999]{Goetze1} G\"otze, F., and Tikhomirov, A.  
Asymptotic Distribution of Quadratic Forms, {\em Annals of Prob}, 1999, {\bf 50}, 1072-98.

\bibitem[G\"otze and Tikhomorov, 2002]{Goetze2} G\"otze, F and Tikhomirov, A., Asymptotic Distribution of Quadratic Forms and Applications, {\em J Theoretical Prob}, 2002, {\bf 15}, 423-75.

\bibitem[G\"otze et al., 2007]{Goetze3} G\"otze, F and Tikhomirov, A., and Yurchenko, V. Asymptotic Expansion in the Central Limit Theorem for Quadratic Forms, {\em J of Mathematical Sciences}, 2007, {\bf 147}(4), 6891-6911. 


\bibitem[Golub and Van Loan, 1980]{Golub} Golub, G.H., and Van Loan, C.F. An analysis of the total least squares problem. {\em SIAM Journal on Numerical Analysis}, 1980, {\bf 17}(6), 883-893.

\bibitem[Grinold and Kahn, 1999]{GrinoldKahn} Grinold, R.C. and Kahn, R.N., {\em Active Portfolio Management}, 2nd Edition, 1999, (McGraw-Hill:New York).

\bibitem[Haldane, 1942]{Haldane} Haldane, J.B.S., Moments of the Distributions of Powers and Products of Normal Variates. {\em Biometrika}, Apr 1942, {\bf 32} (3,4), 
 226-242.


\bibitem[Hald, 2008]{Hald} Hald, A.{\em A History of Parametric Statistical Inference from Bernoulli to Fisher}, 24 Aug 2008,
(Springer Science \& Business Media: Berlin), 1713-1935. 

\bibitem[Hamdan et al., 2016]{Hamdan} Hamdan, R., Pavlowsky, F., Roncalli, T., and Zheng, B. A Primer on Alternative Risk Premia. April 2016. Available online at SSRN: \url{https://ssrn.com/abstract=2766850}  (accessed 1 Dec 2016).



\bibitem[Harvey et al., 2018]{HarveyVol} Harvey, C.R. and Hoyle, E. and Korgaonkar, R. and Rattray, S. and Sargaison, M. and Van Hemert, O. The Impact of Volatility Targeting. May 21, 2018. Available online at SSRN: \url{https://ssrn.com/abstract=3175538}. (accessed 2 July 2018).


\bibitem[Hoffman-Kaminski, 2016]{Hoffman} Hoffman, B. and Kaminski, K. The taming of the Skew. Campbell White Paper Series, June 2016. Available online at \url{https://www.valuewalk.com/wp-content/uploads/2016/06/The_Taming_of_the_Skew___Campbell__Company.pdf}, (accessed 8 Feb 2018).


\bibitem[Hotelling, 1936]{Hotelling}, Hotelling, H.  Relations between two sets of variants. 1936.{\em Biometrika}, {\bf 28},  321-377. 

\bibitem[Hurst-Ooi-Pedersen, 2017]{Hurstetal} Hurst, B., Ooi, Y.H. and  Pedersen, L.H. A century of evidence on trend-following investing. (June 27, 2017). Available Online at SSRN: \url{https://ssrn.com/abstract=2993026}. (Accessed 2 July 2017). 

\bibitem[Hyndman et al., 2008]{Hyndman} Hyndman, R., Koehler, A.B., Ord, J.K., and Snyder, R.D. {\em Forecasting with exponential smoothing: the state space approach}, 2008, (Springer Science \& Business Media: Berlin). 

\bibitem[Isserlis, 1918]{Isserlis} Isserlis, L. On a formula for the product-moment coefficient of any order of a normal frequency distribution in any number of variables. {\em Biometrika} 1918, {\bf 12}(1-2), 134-139.

\bibitem[Jegadeesh and Titman, 1993]{Jegadeesh}  Jegadeesh, N. and  Titman, S. Returns to Buying Winners and Selling Losers: Implications for Stock Market Efficiency, {\em J Finance}, Mar 1993, {\bf 48}(1),  65-91.

\bibitem[Joanes and Gill, 1998]{Joanes} Joanes,  D.N. and Gill, C. N., Comparing measures of sample skewness and kurtosis. {\em The Statistician}, 1998, {\bf 47}, 183-189.

\bibitem[Joarder, 2007]{Joarder} Joarder, A., On some characteristics of the bivariate t-distribution. {\em Intl J Modern Math}, {\bf 2}(2), 191-204


\bibitem[Kan, 2008]{Kan} Kan, R. From moments of sum to moments of product. {\em J  Multivariate Anal}, 2008, {\bf 99}(3) 542-554.

\bibitem[Ko, 2006]{Ko} Ko, M. "Functional central limit theorems for multivariate linear processes generated by dependent random vectors." COMMUNICATIONS-KOREAN MATHEMATICAL SOCIETY 21.4 (2006): 779.

\bibitem[Koshiyama and Firoozye, 2018]{Firoozye2} Koshiyama, A.S., and Firoozye, N. B., Avoiding Back-testing Overfitting by Covariance-Penalties: An Empirical Investigation of the Ordinary and Total Least Squares Cases. https://arxiv.org/abs/1905.05023.


\bibitem[Lee, 2000]{Lee} Lee, W. {\em Theory and methodology of tactical asset allocation}, 2000, Vol. 65. (John Wiley \& Sons: New York).

\bibitem[Lemp\'eri\`ere et al., 2014]{Lemperiere}Lemp\'eri\`ere, Y., Deremble, C., Seager, P., Potters, M., \& Bouchaud, J. P. Two centuries of trend following. 2014. Available online at arXiv \url{https://arxiv.org/abs/1404.3274}, (accessed 17 Jun 2016).


\bibitem[Levine and Pedersen, 2015]{Levine-Pedersen} Levine, A. and Pedersen, L.H., Which Trend Is Your Friend? (May 7, 2015). Financial Analysts Journal, vol. 72, no. 3 (May/June 2016). Available at SSRN: \url{https://ssrn.com/abstract=2603731}. (accessed 17 Jun 2015).

\bibitem[Leyang, 2012]{Leyang} Leyang, W. Properties of the Total Least Squares estimation. {\em Geodesy and Geodynamics}, 2012, {\bf 3}(4), 39-46.

\bibitem[Lo, 2002]{Lo} Lo, A.W. The statistics of Sharpe ratios. {\em Fin Analysts J}, 2002, {\bf 58}(4), 36-52.


\bibitem[Magnus, 1978]{Magnus} Magnus, J.R. {The moments of products of quadratic forms in normal variables.} {\em Stat Neerlandica}, 1978, {\bf 32}(4), 201-210. 

\bibitem[Makridakis, 2000]{Makridakis} Makridakis, S., \& Hibon, M.  The M3-Competition: results, conclusions and implications. {\em Int J Forecasting}, 2002, {\bf 16}(4), 451-476.
Chicago 


\bibitem[Markovsky and Van Huffell, 2007]{Markovsky} Markovsky, I. and Van Huffel, S. Overview of total least-squares methods. {\em Signal Proc}, 2007, {\bf 87}(10), 2283-2302.



\bibitem[Martin-Bana, 2012]{Martin-Bana} Martin, R and Bana, A. Non-linear Momentum Strategies. {\em Risk Magazine}, Nov 2012, 60-65.

\bibitem[Martin-Zou, 2012]{Martin-Zou} Martin, R and Zou, D. Momentum Trading: 'Skews me. {\em Risk Magazine}, Aug 2012, 52-57.

\bibitem[Mertens, 2002]{Mertens} Mertens, E. Comments on variance of the IID estimator in Lo. Research Note, 2002, Available online at \url{http://www.elmarmertens.com/research/discussion\#TOC-Correct-variance-for-estimated-Sharpe-Ratios} (Accessed on 2 July 2017).

\bibitem[Michalowicz et al., 2011]{Michalowicz} Michalowicz, J. V., Nichols, J. M., Bucholtz, F., \& Olson, C. C. A general Isserlis theorem for mixed-Gaussian random variables. {\em Stat \& Prob Letters}, 2011, {\bf 81}(8), 1233-1240.

\bibitem[Miller, 2016]{Miller} Miller, M. {\rm Evaluating bank-sponsored risk-premium strategies}, {\em {Pensions \& Investments}}, APRIL 5, 2017. Available online at \url{http://www.pionline.com/article/20170405/ONLINE/170409975/evaluating-bank-sponsored-risk-premium-strategies}. (Accessed 8 May 2017).


\bibitem[Nadrajah-Kotz, 2003]{Multi-T} Nadarajah, S. and Kotz, S. Multitude of bivariate t-distributions. {\em J Theo Appl Stats}, 2004 {\bf 38}, 527-39.

\bibitem[Nadarajah-Pog\'any, 2016]{DistCorrel} Nadarajah, S., and Pog\'any, T.K. On the distribution of the product of correlated normal random variables. {\em Comptes Rendus Math}, 2016, {\bf 354}(2), 201-204.


\bibitem[Parczewski, 2014]{FCLTWick} Parczewski, Peter. A Wick functional limit theorem. {\em Probab. Math. Stat.}, 2014 {\bf 34}(1), 127-145.

\bibitem[Pav, 2016]{Pav} Pav, Steven E., Notes on Sharpe ratios.  March 2016, Available online at \url{https://cran.r-project.org/web/packages/SharpeR/vignettes/SharpeRatio.pdf}. (Accessed 27 Jan 2017).

\bibitem[Potters and Bouchaud, 2005]{Potters-Bouchaud} Potters, M., and Bouchaud, J.P. Trend followers lose more often than they gain. 2005. Available online at arXiv \url{https://arxiv.org/abs/physics/0508104}. (Accessed 2 Feb 2017).



\bibitem[Rahman and Yu, 1987]{Rahman-Yu} Anisur Rahman, M. and Yu, K.B. Total Least Squares Estimation Using Linear Prediction. {\em IEEE Trans Acoust, Speech and Sig Proc},  Oct 1987, {\bf 35}(10), 1440-54.

\bibitem[Rencher and Christiansen, 2012]{Rencher} Rencher, A.C., and Christensen, W.F.,  {\em MEthods of Multivarite Analysis}, 3rd ed., 2012, (Wiley: New York), ch 11.



\bibitem[Simons, 2006]{Simons} Simons, M. K, {\em Probability Distributions Involving Gaussian Random Variables}, 2006, (Springer Verlag: Berlin), 2006, ch 6.




\bibitem[Wick, 1950]{Wick} Wick, G.C.,  The evaluation of the collision matrix. {\em Phys Rev}, 1950, {\bf 80}(2), 268-272.



\bibitem[Zhang, 2017]{Zhang} Zhang, Xian-Da, Matrix Analysis and Applications, 2017, (Cambridge UP: Cambridge), pp. 334-5.



\end{thebibliography}


\end{document}